\newcommand{\Osc}{\mathrm{Osc}}
\newcommand{\osc}{\mathrm{osc}}
\title{The Dynamics of a Rigid Body \\ in Potential Flow with Circulation}
\author{Joris Vankerschaver$^{a,b}$,
Eva Kanso$^c$ \& 
Jerrold E. Marsden$^a$ \\ \mbox{} \\
{\small $^a$ Control and Dynamical Systems} \\
{\small California Institute of Technology MC 107-81,} 
{\small Pasadena, CA 91125} \\ 
{\small $^b$ Department of Mathematical Physics and Astronomy} \\
{\small Ghent University, Krijgslaan 281, B-9000 Ghent, Belgium} \\
{\small $^c$ Aerospace and Mechanical Engineering} \\
{\small University of Southern California,}
{\small Los Angeles, CA 90089} \\ \mbox{} \\
 {\small E-mail: jv@caltech.edu, kanso@usc.edu, jmarsden@caltech.edu}
}
\begin{document}
\maketitle

\begin{abstract}
	We consider the motion of a two-dimensional body of arbitrary shape in a planar irrotational, incompressible fluid with a given amount of circulation around the body.  We derive the equations of motion for this system by performing symplectic reduction with respect to the group of volume-preserving diffeomorphisms and obtain the relevant Poisson structures after a further Poisson reduction with respect to the group of translations and rotations.  In this way, we recover the equations of motion given for this system by Chaplygin and Lamb, and we give a geometric interpretation for the Kutta-Zhukowski force as a curvature-related effect.  In addition, we show that the motion of a rigid body with circulation can be understood as a geodesic flow on a central extension of the special Euclidian group $SE(2)$, and we relate the cocycle in the description of this central extension to a certain curvature tensor.
\end{abstract}




\tableofcontents

\section{Introduction}




We consider the motion of a rigid planar body, whose shape is not necessarily circular, immersed in a two-dimensional perfect fluid.  We assume that the vorticity of the fluid vanishes, but we allow for a non-zero amount of circulation around the rigid body.   This dynamical system is of fundamental importance in aerodynamics and was studied, among others, by Chaplygin, Kutta, Lamb, and Zhukowski; see \cite{Ko1993}, \cite{Ko2003}, and \cite{BoMa06} for a detailed overview of the literature and background information on this subject. A consequence of the non-zero circulation is the presence of a gyroscopic \emph{lift force} acting on the rigid body, which is proportional to the circulation.  This force is referred to as the \emph{Kutta-Zhukowski force}.

While various aspects of this dynamical system have been discussed throughout the fluid-dynamical literature (see for instance \cite{MiTh1968} or \cite{Batchelor} for a modern account), deeper insight into the Hamiltonian structure of this system had to wait until the work of \cite{Ko1993} and \cite{BoMa06}.  These authors identified a Hamiltonian structure for the equations of motion found by Chaplygin and used this structure to shed further light onto the integrability of the system and to investigate chaoticity.  This pioneering work raised several questions worth investigating in their own right. The first and foremost is that the non-canonical Hamiltonian structure of the rigid body with circulation is obtained \emph{by inspection}, which immediately begs the question whether there are other, more fundamental reasons as to why this should be so.   In this paper, we address this question following the approach of \cite{Ar66}. In particular, we model the dynamics of the rigid body moving in a perfect fluid as a geodesic flow on an infinite-dimensional manifold.   On this space, several symmetry groups act, and by performing successive reductions with respect to each of these groups, we obtain the Hamiltonian structure and equations of motion for the rigid body with circulation in a geometric way.  This approach is also used in \cite{VaKaMa2009} to derive the equations of motion for a rigid body interacting with point vortices.



In this geometric approach, the symmetry groups of the fluid-solid problem are the \emph{\textbf{particle relabeling symmetry}} group (the group of volume-preserving diffeomorphisms of the fluid reference space), and  the \emph{\textbf{special Euclidian group}} $SE(2)$ of uniform solid-fluid translations and rotations.   Symplectic reduction with respect to the former group is equivalent to the specification of the vorticity field of the fluid: in the problem considered here, this amounts to requiring that the external vorticity vanishes and that there is a given amount of circulation around the body.  The group $SE(2)$ acts on the system by combined solid-fluid transformations.  After Poisson reduction with respect to this group, we obtain the Hamiltonian structure of \cite{BoMa06}.  

One straightforward advantage of the geometric description is that it provides immediate insight into the dynamics, which might be harder to obtain by non-geometric means.  As an illustration, we show that the symplectic leaves for the Poisson structure obtained by \cite{BoMa06} are orbits of a certain \emph{\textbf{affine action}} of $SE(2)$ onto $\mathfrak{se}(2)^\ast$, which we compute explicitly.  We show that these leaves are paraboloids of revolution, hence providing a geometric interpretation of a result of Chaplygin.

An additional result of using geometric reduction is that we obtain a new interpretation for the classical 
Kutta-Zhukowski force.  In particular, we show that the Kutta-Zhukowski force is proportional to the curvature of a natural fluid-dynamical connection, called the Neumann connection, which encodes the influence of the rigid body on the surrounding fluid.  In this way, we exhibit interesting parallels between the dynamics of the rigid body with circulation and that of a charged particle in a magnetic field: both systems are acted upon by a gyroscopic force, the Kutta-Zhukowski force and the Lorentz force, respectively.  The reduction procedure to recover the Hamiltonian description of \cite{BoMa06} then turns out to be similar to the way in which \cite{Sternberg1977} derives the equations for a charged particle by including the magnetic field directly into the Poisson structure.

Lastly, we also establish an analogue between the \emph{\textbf{Kaluza-Klein construction}} for magnetic particles and the dynamics of the rigid body with circulation.  In the conventional Kaluza-Klein description, the dynamics of the magnetic particle is made into a geodesic motion by extending the configuration space and including the magnetic potential into the metric.  Given the similarity between the Lorentz and the Kutta-Zhukowski force, a natural question is then whether a similar geometric description exists for the rigid body with circulation.  It turns out that this is indeed so: the extended configuration space in this case is a central extension of $SE(2)$ by means of natural cocycle, which is again related to the curvature giving rise to the Kutta-Zhukowksi force.  This extension of $SE(2)$ is known as the \emph{\textbf{oscillator group}} in quantum mechanics; see \cite{Streater1967}.  The rigid body with circulation is then described by geodesic motion on the oscillator group.

Our description of the rigid body with circulation as a geodesic flow on a central extension is similar to the work of \cite{OvKh1987}, who showed that the Korteweg-de Vries equation can be interpreted as a geodesic motion on the Virasoro group, an extension of the diffeomorphism group of the circle.  It is also worth stressing that the similarity with the classical Kaluza-Klein picture cannot be taken too far: in the Kaluza-Klein description, one modifies the metric to take into account the magnetic field, whereas we leave the metric on $SE(2)$ essentially unchanged and instead we deform the multiplication structure by means of a cocycle, giving rise to a central extension of $SE(2)$.

\paragraph{Outline of this Paper.}  We begin the paper by giving an overview of the classical literature on fluid-structure interactions in section~\ref{sec:prel}.  In section~\ref{sec:fsint} we recall some of the geometric concepts that arise in this context, most notably the particle relabeling symmetry group and the Neumann connection, and we give a brief summary of cotangent bundle reduction, which we use in section~\ref{sec:red} to derive the Chaplygin-Lamb equations describing a rigid body with circulation.  The geometry of the oscillator group and the link with the rigid body with circulation are explained in section~\ref{sec:geodosc}, and we finish the paper in section~\ref{sec:outlook} with a brief discussion of future work.   In appendix~\ref{appendix:rigidgroup} and \ref{appendix:diffgroup}, some elementary results are collected about the geometry of the special Euclidian group $SE(2)$ and the group $\mathrm{Diff}_{\mathrm{vol}}$ of volume-preserving diffeomorphisms.

\paragraph{Acknowledgements.}  We would like to thank Scott Kelly, Jair Koiller, Tudor Ratiu and Banavara Shashikanth for useful suggestions and interesting discussions.

J. Vankerschaver is supported through a postdoctoral fellowship from the Research Foundation
-- Flanders (FWO-Vlaanderen).  Additional financial
support from the Fonds Professor Wuytack is gratefully acknowledged.
E. Kanso and J. E. Marsden would like to acknowledge the support of the National Science Foundation through the grants CMMI 07-57092 and  CMMI 07-57106, respectively.

\section{Body-Fluid Interactions: Classical Formulation} 
\label{sec:prel}

Consider a planar body moving in an infinitely large volume of an incompressible and
inviscid fluid $\mathcal{F}$ at rest at infinity. The body $\mathcal{B}$
is assumed to occupy a simply connected region whose  boundary 
can be conformally mapped to a unit circle, and it is considered to be uniform
and neutrally-buoyant (the body weight is balanced by the force of buoyancy).
Introduce an orthonormal inertial 
frame $\{{\bf e}_{1,2,3} \} $ where $\{{\bf e}_1,{\bf e}_2\}$ span the plane
of motion and ${\bf e}_3$ is the unit normal to this plane. 
The configuration of the submerged 
rigid body can then be described by a rotation $\theta$ about 
$\mathbf{e}_3$ and a translation $\mathbf{x}_o =
x_o\mathbf{e}_1 + y_o\mathbf{e}_2$ 
of a point $O$ (often chosen to coincide with the conformal center of the body). 
The angular and translational velocities expressed relative to the inertial
frame are of the form
$ \dot{\theta}\,\mathbf{e}_3$ and $\mathbf{v} = v_x \,\mathbf{e}_1 + v_y\, \mathbf{e}_2$ 
where $v_x = \dot{x}_o$, $v_y = \dot{y}_o$ (the dot denotes derivative with respect to time $t$).
It is convenient for the following development to 
introduce a moving  frame $\{{\bf b}_{1,2,3} \} $ attached to the body.
The point transformation from the body to the inertial frame can be represented as
\begin{equation}\label{eq:rigidmotion}
\mathbf{x} = R_{\theta} \mathbf{X} + \mathbf{x}_o, \qquad R_{\theta} =  
\begin{pmatrix}
 \cos\theta & -\sin\theta \\ \sin\theta & \cos\theta 
 \end{pmatrix},
\end{equation}

where $\mathbf{x} = x\,\mathbf{e}_1 + y\, \mathbf{e}_2$ and 
$\mathbf{X} = X\, \mathbf{b}_1 + Y\, \mathbf{b}_2,$  while vectors
transform as $\mathbf{v} = R_{\theta} \mathbf{V}.$ The angular and translational 
velocities expressed in the body frame take the form
$\boldsymbol{\Omega} = \Omega \, \mathbf{b}_3$ (where  $\Omega = \dot{\theta}$) 
and $\mathbf{V}=V_x\mathbf{b}_1 + V_y \mathbf{b}_2$ (where
 $V_1 = \dot{x}_o\cos{\theta} + \dot{y}_o\sin\theta$ 
and $V_2 = -\dot{x}_o\sin{\theta} + \dot{y}_o\cos\theta$).  
Note that the orientation
and position $(\theta,x_o,y_o)$ form an element of $SE(2)$, the group
of rigid body motions in $\mathbb{R}^2$. The velocity in the body-frame
 $\zeta = (\Omega, V_{x},V_{y})^T$, where $()^T$ denotes the transpose operation, 
 is an element of the 
vector space $\mathfrak{se}(2)$ which is the space of infinitesimal rotations and translations
in $\mathbb{R}^2$ and is referred to as the Lie algebra of $SE(2)$; for more details on the rigid
body group and its Lie algebra, see Appendix~\ref{appendix:rigidgroup} and references therein.

\paragraph{Fluid Motion.} Let the fluid fill the complement of the body in $\mathbb{R}^2$. 
The reference configuration of the fluid will be denoted by $\mathcal{F}_0$, 
and that of the body by $\mathcal{B}_0$.  The space taken by the fluid at a generic
time $t$ will be denoted by $\mathcal{F}$.  Note however that as time
progresses, the position of the body changes and hence so does its
complement $\mathcal{F}$.   In geometric mechanics, the mapping from
the reference configuration $\mathcal{F}_0$ to the fluid domain $\mathcal{F}$ can be
expressed as an element of the group of volume-preserving diffeomorphisms
reviewed in Appendix~\ref{appendix:diffgroup}.  In Section~\ref{sec:fsint}, 
we present an extension of this geometric approach to the solid-fluid 
problem considered here but beforehand, we briefly describe, 
using the classical vector calculus approach, the fluid motion 
and the equations governing the motion of the submerged body.

The fluid velocity $\mathbf{u}$ can be written using the 
Helmholtz-Hodge decomposition  as follows
\begin{equation}\label{eq:u}
\mathbf{u} = \nabla \Phi_\zeta  \ + \ \mathbf{u}_{\rm v}  ,
\end{equation}
and has to satisfy the impermeability boundary condition on the boundary of the solid body. Now we explain the two terms in this decomposition.

The potential function $\Phi_\zeta$ is harmonic 
and represents the irrotational motion of the fluid generated by
motion of the body.  
The subscript $\zeta$ refers to the fact that $\Phi_\zeta$ is determined by the body velocity $\zeta$.

We emphasize that the motion
of the body inside the fluid does not generate vorticity and only causes
irrotational motions the fluid. The potential function $\Phi_\zeta$ is a
solution to Laplace's equation $\Delta \Phi_\zeta = 0,$
subject to the boundary conditions 
\begin{equation} \label{Neumann}
\Delta \Phi_\zeta = 0, \qquad 
\left. \frac{\partial \Phi_\zeta}{\partial n} \right|_{\partial \mathcal{B}} =
(\boldsymbol{\Omega} \times \mathbf{X} + \mathbf{V}) \cdot \mathbf{n}, \qquad 
\left. \nabla \Phi_\zeta \right|_{\infty} = 0.
\end{equation}
By linearity of Laplace's equation, one can write, following Kirchhoff (see~\cite{lamb}),
\begin{equation}
\label{eq:velpot}
\Phi_\zeta =  \Omega \Phi_{\Omega} + V_x \Phi_x + V_y \Phi_y,
\end{equation}
where
$\Phi_\Omega, \Phi_x,\Phi_y$ are called velocity potentials
and are solutions to Laplace's equation subject to
the boundary conditions on $\partial \mathcal{B}$
\begin{equation}\label{eq:neumann1}
\begin{split}
\left.\dfrac{\partial\Phi_\Omega}{\partial n}\right|_{\partial \mathcal{B}}  \ = \
(\mathbf{X} \times  \mathbf{n} )\cdot \mathbf{b}_3  \ ,
\quad
\left.\dfrac{\partial\Phi_x}{\partial n} \right|_{\partial \mathcal{B}}   \ = \ \mathbf{n} \cdot \mathbf{b}_{1}  \ ,
\quad  \ \
\left.\dfrac{\partial\Phi_y}{\partial n} \right|_{\partial \mathcal{B}}   \ = \ \mathbf{n} \cdot\mathbf{b}_{2}  .
\end{split}
\end{equation}

The velocity  $\mathbf{u}_{\rm v}$  
is a divergence-free vector field and can be written
as $\mathbf{u}_{\rm v}= \nabla \times \boldsymbol{\Psi} + \mathbf{u}_\Gamma$,
where $\nabla \times \boldsymbol{\Psi}$ describes the fluid velocity due to ambient
vorticity and $\mathbf{u}_\Gamma$ describes the fluid velocity due to a net circulatory flow 
around the submerged body.
The vector potential  $\boldsymbol{\Psi}$ 
satisfies $\Delta \boldsymbol{\Psi} = -\boldsymbol{\omega}$, where $\boldsymbol{\omega} 
= \nabla \times \mathbf{u}_{\rm v}$ is the vorticity field, 
subject to the boundary conditions $(\nabla \times  \boldsymbol{\Psi}) \cdot  
\mathbf{n}=0$ on $\partial \mathcal{B}$ and $\nabla \times  \boldsymbol{\Psi}=0$ at infinity. 
Clearly, in the absence of ambient vorticity, $\boldsymbol{\Psi}$ is harmonic
and can be written for planar flows as $\boldsymbol{\Psi}= \Psi \mathbf{e}_3$, 
where $\Psi$ is  referred to as the stream function and satisfies the boundary conditions 
(see~\cite[\S9.40]{MiTh1968})
\begin{equation} \label{boundary}
\left. 	\Psi \right|_{\partial \mathcal{B}} = V_x Y - V_y X - \frac{\Omega}{2}(X^2 + Y^2) = 
\ \text{function of time only}. 
\end{equation}

The harmonic vector field $ \mathbf{u}_\Gamma$ is non-zero only when
there is a net circulatory flow around the body; it satisfies $\nabla
\cdot {\bf u}_{\Gamma} = 0$ and $\nabla \times {\bf u}_{\Gamma}= 0$
(i.e., $\Delta \mathbf{u}_\Gamma=0$) and the boundary conditions $
{\bf u}_{\Gamma} \cdot {\bf n} = 0$ on $\partial \mathcal{B}$ and
${\bf u}_{\Gamma} = 0$ at infinity. Note that, in three dimensional
flows, one does not need the harmonic vector field
$\mathbf{u}_\Gamma$.\footnote{ In three dimensions, any closed curve
  in the exterior of a {\it bounded} body is contractible, so the
  harmonic vector field $\mathbf{u}_\Gamma$ may be set to zero. This
  result is due to the {\em Poincar\'{e} Lemma} which can be alternatively
  stated as follows: a closed one-form on a 
  (sub)-manifold with trivial first cohomology is globally exact.
 }
 
 A harmonic stream function $\Psi_\Gamma$ associated with the circulation around the planar body
can be defined such that $\left. \Psi_\Gamma\right|_{\partial \mathcal{B}}=$ constant. 
The function $\Psi_\Gamma$ can be found using a conformal transformation 
that relates the flow field in the region
exterior to the body to that in the region exterior to the unit circle. For concreteness,
let $(\tilde{X},\tilde{Y})$ denote the body coordinates in the circle plane (which are measured relative
to a frame attached to the center of the circle as a result of choosing
the origin of the body frame in the physical plane to be placed at the conformal center).
The stream function $\Psi_\Gamma$ can be readily obtained by observing that 
the effect of having a net circulation $\Gamma$ around the body is equivalent to placing a point
vortex of strength $\Gamma$ at the center of mass of the body; namely,
\begin{equation} \label{circulation}
  \mathbf{u}_\Gamma = \nabla \times (\Psi_\Gamma \,  \mathbf{e}_3),  \qquad
  \Psi_\Gamma =   \frac{\Gamma}{4\pi} \log (\tilde{X}^2 + \tilde{Y}^2).
\end{equation}
Alternatively, the circulatory flow could be obtained as the gradient of a harmonic potential $\Phi_\Gamma$
(the harmonic conjugate to $\Psi_\Gamma$ satisfying the Cauchy-Riemann relations). Note that
$\Phi_\Gamma$ would have to satisfy $\left. \nabla \Phi_\Gamma \cdot \mathbf{n} \right|_{\partial \mathcal{B}} = 0$. 

\paragraph{Kinetic Energy.} The kinetic energy of the fluid-solid system is simply the 
sum of the kinetic energies for both constitutive systems:
\begin{equation} \label{Tkin}
	T =  T_{\mathrm{fluid}} + T_{\mathrm{body}} = 
	\frac{1}{2}\int_{\mathcal{F}} \left\Vert \mathbf{u} \right\Vert^2 dV + 
	\frac{1}{2} m \left\Vert \mathbf{V}\right\Vert^2 + \frac{1}{2} \mathbb{I} \Omega^2.
\end{equation}
where $dV$ is a standard volume (more precisely, area) element  on $\mathbb{R}^2$ . The kinetic energy
of the rigid body can be readily rewritten in the form
\begin{equation}\label{eq:Tbody}
T_{\mathrm{body}} = \dfrac{1}{2} \zeta^T \mathbb{M}_{b} \zeta, \qquad 
\mathbb{M}_b := \begin{pmatrix}
      \mathbb{I} & 0 \\
      0 & m \mathbf{I} 
    \end{pmatrix}.
\end{equation}
where $\mathbf{I}$ is the $2$-by-$2$ identity matrix. The kinetic energy of the fluid
can be written as
\begin{equation} \label{eq:T}
\begin{split}
T_{\mathrm{fluid}}  & = \dfrac{1}{2}\int_{\mathcal{F}} \left\Vert \mathbf{u} \right\Vert^2 dV = 
\dfrac{1}{2}\int_{\mathcal{F}} \nabla(\Phi_\zeta + \Phi_\Gamma)\cdot  \nabla(\Phi_\zeta + \Phi_\Gamma) \, dV \\
& = \dfrac{1}{2}\int_{\mathcal{F}} \nabla\Phi_\zeta \cdot  \nabla\Phi_\zeta  \, dV
+ \int_{\mathcal{F}} \nabla\Phi_\zeta \cdot \nabla \Phi_\Gamma \,  dV
+ \dfrac{1}{2}\int_{\mathcal{F}} \nabla \Phi_\Gamma \cdot  \nabla \Phi_\Gamma \, dV . \\ 
\end{split}
\end{equation}
The first term in  \eqref{eq:T}  can be rewritten using the divergence theorem, then employing~\eqref{eq:velpot}
and~\eqref{eq:neumann1}, as follows
\begin{equation} \label{eq:Tf_zeta}
\begin{split}
\dfrac{1}{2}\int_{\mathcal{F}} \nabla\Phi_\zeta \cdot  \nabla\Phi_\zeta  \, dV = 
 \dfrac{1}{2} \oint_{\partial \mathcal{B}} \Phi_\zeta \dfrac{\partial \Phi_\zeta}{\partial n} \, dS = 
 \dfrac{1}{2} \zeta^T \mathbb{M}_{f} \zeta, 
\end{split}
\end{equation}
where $ \mathbb{M}_{f} $ is a $3\times3$ added mass matrix.  Now, using the fact that $\nabla \Phi_\zeta$ and $\nabla \Phi_\Gamma$ are $L_2$-orthogonal: since $\nabla \Phi_\Gamma = \nabla \times (\Psi_\Gamma \,  \mathbf{e}_3)$, where $\Psi_\Gamma$ is uni-valued, we have 
\begin{align*}
	\int_{\mathcal{F}} \nabla \Phi_\zeta \cdot  \nabla \Phi_\Gamma \, dV & = 
	\int_{\mathcal{F}} \nabla \Phi_\zeta \cdot \nabla \times (\Psi_\Gamma \,  \mathbf{e}_3) \, dV 
	= \oint_{\partial \mathcal{B}} \Phi_\zeta \frac{\partial \Psi_\Gamma}{\partial s} \, dS = 0,
\end{align*}
since $\Psi_\Gamma$ is constant on the boundary $\mathcal{B}$, so that the tangential derivative $\partial \Psi_\Gamma/ \partial s$ vanishes.  The last term in \eqref{eq:T} can be treated as follows:  in polar coordinates, $\Phi_\Gamma(r, \theta) = \Gamma \theta$, so that if we enclose the fluid-solid system in a large circular box of radius $\Lambda$, 
\[
	\int_{\mathcal{F}} \nabla \Phi_\Gamma \cdot  \nabla \Phi_\Gamma \, dV = 2\pi \Gamma^2 \int_R^\Lambda \frac{dr}{r} = 2\pi \Gamma^2 \log \frac{\Lambda}{R}.
\]
This constant term diverges logarithmically as $\Lambda \rightarrow +\infty$.  To remedy this, we regularize the kinetic energy by discarding this infinite contribution -- a remedy also used in~\cite{lamb,BoMa06}. That is, we consider the kinetic energy of the solid-fluid system to be given by
\begin{equation}
\label{eq:T_solidfluid}
T =  \dfrac{1}{2} \zeta^T (\mathbb{M}_b + \mathbb{M}_f) \zeta .
\end{equation}


\paragraph{Equations of Motions.} 
The equations governing the motion of the body in potential flow with non-zero circulation around the body
but in the absence of ambient vorticity are of the form
\begin{equation}
\begin{split} \label{EoM1}
	\dot{\Pi} & = (\mathbf{P} \times \mathbf{V})\cdot \mathbf{b}_3  \\
	\dot{\mathbf{P}} & =  \mathbf{P} \times \boldsymbol{\Omega} + \Gamma \mathbf{b}_3 \times \mathbf{V}
\end{split}
\end{equation}
Here, $\Pi$ and $\mathbf{P}$ denote the angular and linear momenta of the solid-fluid system
expressed in the body frame. They are given in terms of the velocity in body frame by (here, $T$ is given by~\eqref{eq:T_solidfluid})
\begin{equation}
\Pi = \dfrac{\partial T}{\partial \Omega}, \qquad 
\mathbf{P}  = \dfrac{\partial T}{\partial \mathbf{V}}. 
\end{equation}

One of the main objectives of this paper is to use the methods of
geometric mechanics (particularly the \textit{reduction by stages} approach)
to derive equations~\eqref{EoM1} governing the motion of
the body in potential flow and with non-zero circulation. 
The case of a body of arbitrary geometry interacting with external
point vortices is addressed in~\cite{VaKaMa2009}.  It would be of interest to extend these results to the case of a rigid body with circulation moving in the field of point vortices.


\section{Body-Fluid Interactions:  Geometric Approach}
 \label{sec:fsint}

In this section, we first establish the structure of the fluid-solid configuration space as a principal fiber bundle and show that there exists a distinguished connection on this bundle.  We then recall
some general results from cotangent bundle reduction that will be useful to reduce the system and derive the equations of motion \eqref{EoM1}.


 \subsection{Geometric Fluid-Solid Interactions}

\paragraph{The Configuration Space.}  We describe the configurations of the body-fluid system by means of pairs $(g, \varphi)$, where $g$ is an element of $SE(2)$ describing the body motion and $\varphi: \mathcal{F}_0 \rightarrow \mathbb{R}^2$ is an embedding of the fluid reference space $\mathcal{F}_0$ into $\mathbb{R}^2$ describing the fluid.  The pairs $(g, \varphi)$  have to satisfy the impermeability boundary conditions dictated by the inviscid fluid model. The embedding $\varphi$ represents the configuration of an incompressible
  fluid and has therefore to be volume-preserving, i.e.,
$\varphi^\ast( dV ) = dV_0$, where $dV_0$ and $dV$ are volume forms on $\mathcal{F}_0$ and $\mathbb{R}^2$, respectively.   We denote the space of all such volume-preserving embeddings by $\mathrm{Emb}_{\mathrm{vol}}(\mathcal{F}_0, \mathbb{R}^2)$. We denote by $Q$ the space of all pairs $(g, \varphi)$, $g \in SE(2)$ and $\varphi \in  
\mathrm{Emb}_{\mathrm{vol}}(\mathcal{F}_0, \mathbb{R}^2)$ that satisfy the appropriate boundary 
conditions. That is, the configuration manifold $Q$ of the body-fluid system is a submanifold of the product space $SE(2) \times \mathrm{Emb}_{\mathrm{vol}}(\mathcal{F}_0, \mathbb{R}^2)$.

\paragraph{Tangent and Cotangent Spaces.} At each $(g, \varphi) \in Q$, the tangent space $T_{(g, \varphi)} Q$ 
is a subspace of  $T_g SE(2) \times T_{\varphi} \mathrm{Emb}_{\mathrm{vol}}(\mathcal{F}_0, \mathbb{R}^2)$ whose elements we denote by $(g, \varphi, \dot{g}, \dot{\varphi})$.  Here, $\dot{g}$ is an element of $T_g SE(2)$ and $\dot{\varphi}$ is a map from $\mathcal{F}_0$ to $T\mathbb{R}^2$ such that $\dot{\varphi}(x) \in T_{\varphi(x)} \mathbb{R}^2$ for all $x \in \mathcal{F}_0$.  Note that $\dot{g}$ represents the angular and linear velocity of the rigid body relative to the inertial frame while $\dot{\varphi}$ represents the \textbf{\emph{material}} or \textbf{\emph{Lagrangian velocity}} of the fluid. It is easier however to represent the elements of $TQ$ using 
the rigid body velocity expressed in body frame $\zeta$ and the fluid velocity $\mathbf{u}$.  
Note that, in the group theoretic  notation, the body velocity may be defined as $\zeta = g^{-1} \dot{g}$ and the \emph{\textbf{spatial}} or \emph{\textbf{Eulerian velocity field}} of the fluid may be defined as $\mathbf{u} := \dot{\varphi} \circ \varphi^{-1}$.  The vector field $\textbf{u}$ is a vector field on $\mathcal{F} := \varphi(\mathcal{F}_0)$, in contrast to $\dot{\varphi}$, which is merely a map from $\mathcal{F}_0$ to $T\mathbb{R}^2$.    We emphasize that $\zeta$ and $\mathbf{u}$ cannot be chosen arbitrarily, but have to satisfy the 
impermeability boundary conditions.  

The cotangent space $T^\ast_{(g, \varphi)} Q$ at a point $(g, \varphi) \in Q$ 
consists of elements $(g, \varphi, \pi, \alpha)$, where $\pi = \mathbb{M}_b \zeta \in \mathfrak{se}(2)^\ast$ 
is the momentum of the submerged body and $\alpha \in \Omega^1(\mathcal{F})$ is the one-form dual
of the velocity field $\mathbf{u}$, see~Appendices~\ref{appendix:rigidgroup} and~\ref{appendix:diffgroup}
for more details.

\paragraph{Kinetic Energy on $T^\ast Q$}. The kinetic energy in~\eqref{eq:T_solidfluid} can be used
to define a metric on the cotangent bundle $T^\ast Q$. To verify this, it is informative to recall here the 
\emph{\textbf{Hodge decomposition}} of differential forms (see \cite{AbMaRa1988}).  Any one-form $\alpha
\in \Omega^1(\mathcal{F})$ can be decomposed in a unique way as $\alpha = \mathbf{d}\Phi + \delta \Psi + \alpha_\Gamma$, where $\Phi \in \Omega^0(\mathcal{F})$, $\Psi \in \Omega^2(\mathcal{F})$, and $\alpha_\Gamma$ is a harmonic form: $\mathbf{d}\alpha_\Gamma = \delta \alpha_\Gamma = 0$.  
The  Hamiltonian of the solid-fluid system can now be defined as the function $H: T^\ast Q \rightarrow \mathbb{R}$ given by 
\begin{align*}
	H(g, \varphi, \pi_b, \alpha) & = 
		\frac{1}{2} \left\Vert \alpha\right\Vert^2 +
		\frac{1}{2} \pi^T \, \mathbb{M}_b^{-1} \, \pi 
		 = \frac{1}{2} \left\Vert \mathbf{d} \Phi_\zeta\right\Vert^2 + 
			\frac{1}{2} \left\Vert \delta \Psi\right\Vert^2 +
			 \frac{1}{2} \pi_b^T \, \mathbb{M}_b^{-1} \, \pi_b,
\end{align*}
where the norm in $\left\Vert \alpha\right\Vert^2$ is that in (\ref{formnorm}) on one-forms induced by the Euclidian metric and where we have again discarded the infinite constant $\left\Vert \alpha_\Gamma \right\Vert^2$. 
Here we have denoted the momentum by $\pi_b$ to emphasize the fact that this variable encodes the momentum of the rigid body only.  Later on, we will consider the momentum of the combined solid-fluid system, which will be denoted by $\pi = (\Pi, \mathbf{P})$. 

 The first term in 
the right-hand side of the Hamiltonian can be expressed in terms of the added mass matrix: 
\begin{align*}
	\left\Vert \mathbf{d} \Phi_\zeta\right\Vert^2 & = 
				\int_{\mathcal{F}} \mathbf{d} \Phi_\zeta \wedge \ast \mathbf{d} \Phi_\zeta 
		   = \oint_{\partial \mathcal{B}} \Phi_\zeta \ast \mathbf{d} \Phi_\zeta
		- \int_{\mathcal{F}} \Phi_\zeta \mathbf{d} \ast \mathbf{d} \Phi_\zeta \\
		& =  \oint_{\partial \mathcal{B}} \Phi_\zeta \frac{\partial \Phi_\zeta}{\partial n} \, dS 
			- \int_{\mathcal{F}} \Phi_\zeta \nabla^2 \Phi_\zeta \, dV
		 = \zeta^T \, \mathbb{M}_f \zeta,
\end{align*}
so that the Hamiltonian becomes 
\begin{equation} \label{almostredham}
H(g, \varphi, \pi_b, \alpha) = \frac{1}{2} \left\Vert \delta \Psi\right\Vert^2
	+ \frac{1}{2} \pi_b^T \, \mathbb{M}_b^{-1}(\mathbb{M}_b + \mathbb{M}_f)  \mathbb{M}_b^{-1} \, \pi_b
\end{equation}
using $\zeta = \mathbb{M}_b^{-1} \pi_b$.
The term involving $\delta \Psi$ yields the kinetic energy due to vortical structures present in the fluid
and is zero for the rigid body with circulation.

\paragraph{The Action of the Group of Volume-Preserving Diffeomorphisms.}  
 The group of all volume-preserving diffeomorphisms $\mathrm{Diff}_{\mathrm{vol}}(\mathcal{F})$
acts from the right on $Q$ by composition: 
for any $(g, \varphi) \in Q$ and $\phi \in \mathrm{Diff}_{\mathrm{vol}}(\mathcal{F})$ we define
\[
	(g, \varphi) \cdot \phi := (g, \varphi \circ \phi).
\]
This action leaves the kinetic energy \eqref{almostredham} on $Q$ invariant, since the Eulerian velocity field $\mathbf{u}$ is itself invariant.  The $\mathrm{Diff}_{\mathrm{vol}}(\mathcal{F})$-invariance
represents the \textbf{\emph{particle relabeling symmetry}}. 
The manifold $Q$ is hence the total space of a principal fiber bundle with structure group $\mathrm{Diff}_{\mathrm{vol}}(\mathcal{F})$ over $SE(2)$.  Here, the bundle projection projection $\mathrm{pr}: Q \rightarrow SE(2)$ is simply the projection onto the first factor: $\mathrm{pr}(g, \varphi) = g$.  One can readily show that the infinitesimal generator $X_Q$ corresponding to an element $X \in \mathfrak{X}_{\mathrm{vol}}(\mathcal{F}_0)$ is given by 
\begin{equation}
	X_Q(g, \varphi) := (0, \varphi_\ast X) \in T_{(g, \varphi)} Q.
\end{equation}

\paragraph{The Momentum Map of the Particle Relabeling Symmetry.}  
The group $\mathrm{Diff}_{\mathrm{vol}}(\mathcal{F}_0)$ acts on $Q$ and hence on $T^\ast Q$ by the cotangent lifted action.  We now compute the momentum map corresponding to this action, \emph{i.e.} the particle relabeling symmetry.  
This is a map $J$ from $T^\ast Q$ to $\mathfrak{X}^\ast_{\mathrm{vol}}(\mathcal{F}_0)$, and we recall from appendix~\ref{appendix:diffgroup} that $\mathfrak{X}^\ast_{\mathrm{vol}}(\mathcal{F}_0) = \mathbf{d}\Omega^1(\mathcal{F}_0) \times \mathbb{R}$.  Consequently, the momentum map has two components, corresponding with circulation and vorticity (pulled back to the reference configuration $\mathcal{F}_0$).  The statement that the momentum map is conserved then translates into \textbf{\emph{Kelvin's theorem}} that the vorticity is advected with the fluid, and that the circulation around each material loop is conserved.

\begin{proposition} \label{prop:vortmom}
The momentum map $J$ of the $\mathrm{Diff}_{\mathrm{vol}}(\mathcal{F}_0)$-action on $T^\ast Q$ is given by 
\begin{equation} \label{mommap}
	J(\pi_b, \alpha) = (\mathbf{d}\varphi^\ast\alpha, \Gamma), \quad \text{where} \quad \Gamma = \oint_{\partial \mathcal{B}} \varphi^\ast\alpha.
\end{equation}
\end{proposition}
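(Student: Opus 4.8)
The plan is to apply the standard formula for the momentum map of a cotangent-lifted action and then to translate the resulting abstract pairing into fluid-dynamical quantities using the identification $\mathfrak{X}^\ast_{\mathrm{vol}}(\mathcal{F}_0) = \mathbf{d}\Omega^1(\mathcal{F}_0) \times \mathbb{R}$ recalled just above the statement. Recall that for a cotangent lift the momentum map $J$ is characterized by $\langle J(\alpha_q), X \rangle = \langle \alpha_q, X_Q(q)\rangle$, the natural pairing of the covector $\alpha_q$ with the infinitesimal generator $X_Q$ at the base point $q$. First I would substitute the data of our problem: at $q = (g, \varphi)$ the covector is $\alpha_q = (\pi_b, \alpha)$, while the computation of the infinitesimal generator gives $X_Q(g, \varphi) = (0, \varphi_\ast X)$ for $X \in \mathfrak{X}_{\mathrm{vol}}(\mathcal{F}_0)$. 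Since the $SE(2)$-component of $X_Q$ vanishes, the body momentum $\pi_b$ drops out of the pairing entirely and only the fluid contribution survives:
\[
	\langle J(\pi_b, \alpha), X \rangle = \langle \alpha, \varphi_\ast X \rangle = \int_{\mathcal{F}} i_{\varphi_\ast X}\alpha \, dV,
\]
where I use that the pairing of the one-form $\alpha$ with a Lagrangian velocity is the $L^2$-pairing of $\alpha$ with the corresponding Eulerian field $\varphi_\ast X$ on $\mathcal{F}$.

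Next I would transport this integral to the reference configuration. Because $\varphi$ is volume-preserving we have $\varphi^\ast dV = dV_0$, and by naturality of the interior product under the diffeomorphism $\varphi$, together with $\varphi^\ast(\varphi_\ast X) = X$, one has $\varphi^\ast(i_{\varphi_\ast X}\alpha) = i_X \varphi^\ast\alpha$. The change-of-variables formula then yields
\[
	\int_{\mathcal{F}} i_{\varphi_\ast X}\alpha \, dV = \int_{\mathcal{F}_0} i_X(\varphi^\ast\alpha)\, dV_0 = \langle [\varphi^\ast\alpha], X \rangle,
\]
the last equality being the definition of the pairing between $\mathfrak{X}_{\mathrm{vol}}(\mathcal{F}_0)$ and its dual $\Omega^1(\mathcal{F}_0)/\mathbf{d}\Omega^0(\mathcal{F}_0)$. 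It then remains to read off the two components of this functional under the isomorphism $\Omega^1(\mathcal{F}_0)/\mathbf{d}\Omega^0(\mathcal{F}_0) \cong \mathbf{d}\Omega^1(\mathcal{F}_0) \times \mathbb{R}$: the class $[\varphi^\ast\alpha]$ maps to its differential $\mathbf{d}\varphi^\ast\alpha$ together with the period $\oint_{\partial\mathcal{B}}\varphi^\ast\alpha$ over a generator of $H_1(\mathcal{F}_0)$, which is exactly $(\mathbf{d}\varphi^\ast\alpha, \Gamma)$.

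The main obstacle is not the formal manipulation but justifying this last identification, that is, understanding why the dual splits as a vorticity part plus a single real circulation and why the $\mathbb{R}$-factor is captured by the boundary loop $\partial\mathcal{B}$ rather than by a circle at infinity. This rests on the topology of the fluid domain: $\mathcal{F}_0$ is the exterior of one body and is homotopy equivalent to a circle, so $H^1(\mathcal{F}_0) \cong \mathbb{R}$, and the differential map $[\beta] \mapsto \mathbf{d}\beta$ on $\Omega^1/\mathbf{d}\Omega^0$ has kernel exactly the harmonic representatives, whose class is detected by the period $\oint_{\partial\mathcal{B}}\beta$. Since $\varphi^\ast\alpha$ is in general not closed (its differential is precisely the advected vorticity $\mathbf{d}\varphi^\ast\alpha$), the loop around the body and the loop at infinity give genuinely different values, so the choice $\oint_{\partial\mathcal{B}}$ must be made carefully and identified with the correct generator of $H_1(\mathcal{F}_0)$. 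Granting the description of $\mathfrak{X}^\ast_{\mathrm{vol}}(\mathcal{F}_0)$ and its pairing from the appendix, the remainder of the argument is the short chain above: the momentum-map identity, the $L^2$-pairing, and the volume-preserving change of variables.
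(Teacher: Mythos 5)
Your proposal is correct and follows essentially the same route as the paper's own proof: apply the cotangent-lift momentum map formula $\left<J(\pi_b,\alpha), X\right> = \left<(\pi_b,\alpha), X_Q(g,\varphi)\right>$, note that only the fluid component survives, pull the $L^2$-pairing back to $\mathcal{F}_0$ using volume preservation to get $[\varphi^\ast\alpha]$, and then invoke the isomorphism \eqref{iso} to read off the components $(\mathbf{d}\varphi^\ast\alpha, \Gamma)$. Your additional discussion of the topology underlying that isomorphism is a more explicit justification of what the paper delegates to Appendix~\ref{appendix:diffgroup}, but it is not a different argument.
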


\begin{proof}
We use the well-known formula for the momentum map of a cotangent lifted action 
(see \cite{AbMa78}).  For each $X \in \mathfrak{X}_{\mathrm{vol}}(\mathcal{F}_0)$, we have
\begin{align*}
	\left<J(\pi_b, \alpha), X\right> &= \left<( \pi_b, \alpha), X_Q(g, \varphi)\right>
		= \int_{\mathcal{F}} \left<\alpha, \varphi_\ast  X\right> dV 
		 = \int_{\mathcal{F}_0} \left<\varphi^\ast \alpha, X\right> dV_0,
\end{align*}
so that $J( \pi_b, \alpha) = [\varphi^\ast \alpha] \in \mathfrak{X}^\ast_{\mathrm{vol}}(\mathcal{F}_0)$.  After composition with the isomorphism (\ref{iso}) we obtain the desired form (\ref{mommap}).
\end{proof}

\paragraph{The One-Form $\alpha_\Gamma$ with Circulation $\Gamma$.}

Recall from section~\ref{sec:prel} that having circulation $\Gamma$ around the rigid body is equivalent to placing a point vortex of strength $\Gamma$ at the conformal center of the body, whose velocity field $\mathbf{u}_\Gamma$ was given in \eqref{circulation}.

In the remainder of this paper it will be easier to work with the one-form $\alpha_\Gamma$ on $\mathcal{F}$ given by $\alpha_\Gamma = \mathbf{u}_\Gamma^\flat$, or explicitly by 
\begin{equation} \label{vortalpha}
\alpha_\Gamma = \delta (\Psi_\Gamma dV),
\end{equation}
where $\Psi_\Gamma$ is the stream function \eqref{circulation}, and $dV$ is the volume form on $\mathbb{R}^2$.
\begin{proposition}
The one-form $\alpha_\Gamma$ is a harmonic one-form on $\mathcal{F}$ and satisfies 
\begin{equation} \label{contint}
	\oint_{\partial \mathcal{B}} \alpha_\Gamma = \Gamma.
\end{equation}
In particular, $\alpha_\Gamma$ is $L_2$-orthogonal to the space of exact one-forms.
\end{proposition}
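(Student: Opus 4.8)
The plan is to establish three facts in turn: that $\alpha_\Gamma$ is closed, that it is co-closed, and that its integral around $\partial\mathcal{B}$ equals $\Gamma$; the $L_2$-orthogonality to exact forms then follows as an immediate corollary. First I would verify harmonicity. Since $\alpha_\Gamma = \mathbf{u}_\Gamma^\flat$ and $\mathbf{u}_\Gamma$ was constructed in section~\ref{sec:prel} to satisfy $\nabla\cdot\mathbf{u}_\Gamma = 0$ and $\nabla\times\mathbf{u}_\Gamma = 0$, the translation into the language of forms is direct: in two dimensions, $\delta\alpha_\Gamma = -\nabla\cdot\mathbf{u}_\Gamma = 0$ (co-closedness is the divergence-free condition) and $\mathbf{d}\alpha_\Gamma = (\nabla\times\mathbf{u}_\Gamma)\,dV = 0$ (closedness is the irrotationality condition). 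Alternatively, starting from the expression $\alpha_\Gamma = \delta(\Psi_\Gamma\,dV)$ in \eqref{vortalpha}, co-closedness is automatic since $\delta^2 = 0$, and closedness reduces to checking that $\Psi_\Gamma = \frac{\Gamma}{4\pi}\log(\tilde X^2+\tilde Y^2)$ is harmonic away from the body, which it is as the real part (up to scaling) of a complex logarithm. Either route gives $\mathbf{d}\alpha_\Gamma = \delta\alpha_\Gamma = 0$.

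Next I would compute the contour integral \eqref{contint}. Using polar coordinates in the circle plane, the harmonic conjugate of $\Psi_\Gamma$ is the potential $\Phi_\Gamma = \frac{\Gamma}{2\pi}\theta$, so that $\alpha_\Gamma = \mathbf{d}\Phi_\Gamma = \frac{\Gamma}{2\pi}\,d\theta$ in the circle plane. Integrating around any loop enclosing the body once,
\begin{equation}
\oint_{\partial\mathcal{B}}\alpha_\Gamma = \frac{\Gamma}{2\pi}\oint d\theta = \frac{\Gamma}{2\pi}\cdot 2\pi = \Gamma,
\end{equation}
which is the claimed identity. Because conformal transformations preserve the circulation (the integral of a closed one-form depends only on the homotopy class of the loop), the value $\Gamma$ is independent of whether one computes in the physical plane or the circle plane, and independent of which enclosing contour is chosen.

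Finally, the $L_2$-orthogonality to exact one-forms is a consequence of harmonicity together with the Hodge decomposition recalled earlier. For any exact form $\mathbf{d}f$, integration by parts gives $\langle\alpha_\Gamma,\mathbf{d}f\rangle = \int_{\mathcal{F}}\mathbf{d}f\wedge\ast\alpha_\Gamma$, and after moving the exterior derivative off $f$ via Stokes' theorem one is left with a boundary term plus $\int_{\mathcal{F}} f\,\mathbf{d}\ast\alpha_\Gamma$; the latter vanishes because $\delta\alpha_\Gamma = 0$ forces $\mathbf{d}\ast\alpha_\Gamma = 0$, and the boundary contributions at $\partial\mathcal{B}$ and at infinity vanish under the decay and impermeability conditions. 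The main obstacle I anticipate is not any single computation but the careful handling of the two boundary terms at infinity: because $\alpha_\Gamma$ decays only like $1/r$, one must confirm that the boundary integral over the large circle of radius $\Lambda$ genuinely vanishes (or is cancelled) as $\Lambda\to\infty$ rather than producing a divergent or finite residual, paralleling the regularization already needed for $\|\alpha_\Gamma\|^2$ in the kinetic energy. Once the far-field behavior is controlled, the orthogonality is immediate.
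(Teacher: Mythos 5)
Your proposal is correct, and it follows the same overall route as the paper's proof: co-closedness is immediate from $\alpha_\Gamma = \delta(\Psi_\Gamma\, dV)$ and $\delta^2 = 0$, closedness holds in the fluid domain, the contour integral \eqref{contint} is read off from the explicit stream function \eqref{circulation}, and orthogonality to exact forms follows from harmonicity. Two differences in execution are worth noting. First, where you argue closedness from the irrotationality of $\mathbf{u}_\Gamma$ (equivalently, harmonicity of $\Psi_\Gamma$ away from the body), the paper states the distributional identity $\mathbf{d}\alpha_\Gamma = \Gamma\,\delta(\mathbf{X})\,dV$, i.e.\ the vorticity of $\alpha_\Gamma$ is a Dirac measure at the conformal center; this is the same fact, phrased so as to make explicit that the circulation is carried by a point vortex hidden inside the body (and, with Stokes' theorem applied over $\mathcal{B}$, it also yields the period $\Gamma$ directly). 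Second, for the $L_2$-orthogonality the paper simply invokes the Hodge theorem, whereas you prove it by hand via Stokes' theorem, using $\mathbf{d}\ast\alpha_\Gamma = 0$, impermeability on $\partial\mathcal{B}$, and decay at infinity. Your version is more self-contained and correctly isolates the one genuinely delicate point that the paper's one-line citation hides: $\mathcal{F}$ is noncompact and $\alpha_\Gamma$ decays only like $1/r$, so the boundary term on the large circle of radius $\Lambda$ must be checked against the decay of the exact form being paired with (it does vanish for the Kirchhoff potentials $\mathbf{d}\Phi_\zeta$, whose gradients decay like $1/r^2$, which is the only case the paper actually uses). In short, the two arguments establish the same statement; yours trades brevity for explicit control of the far-field hypotheses under which the orthogonality pairing is even well defined.
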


\begin{proof}
The form $\alpha_\Gamma$ satisfies $\delta \alpha_\Gamma = 0$ by definition, and we have that $\mathbf{d} \alpha_\Gamma = \Gamma \delta(\mathbf{X}) dV$, so that $\mathbf{d} \alpha_\Gamma = 0$ in the fluid domain $\mathcal{F}$.  Hence, $\alpha_\Gamma$ is harmonic and (by means of the Hodge theorem) $L_2$-orthogonal to the space of exact one-forms.  The line integral \eqref{contint} follows from the expression \eqref{circulation} for the stream function $\Psi_\Gamma$.
\end{proof}

\paragraph{The Action of the Euclidian Symmetry Group.}
In addition to the right principal action of $\mathrm{Diff}_{\mathrm{vol}}(\mathcal{F}_0)$ on $Q$ described before, 
the special Euclidian group $SE(2)$ acts on $Q$ by bundle automorphisms from the \emph{left}.  In order words, there is an action $\psi : SE(2) \times Q \rightarrow Q$ given by 
\begin{equation} \label{seaction}
	\psi(h, (g, \varphi)) = h \cdot (g, \varphi) = (hg, h\varphi), 
\end{equation}
for all $h \in SE(2)$ and $(g, \varphi) \in Q$.  The embedding $h\varphi$ is defined by $(h\varphi)(x) = h \cdot \varphi(x)$, where the action on the right-hand side is just the standard action of $SE(2)$ on $\mathbb{R}^2$.  From a physical point of view, the $SE(2)$-action corresponds to the invariance of the combined solid-fluid system under arbitrary rotations and translations.

\paragraph{The Neumann Connection.}  

The bundle $\mathrm{pr} : Q \rightarrow SE(2)$ (defined by the $\mathrm{Diff}_{\mathrm{vol}}(\mathcal{F})$-invariance) is equipped with a principal fiber bundle connection, which was termed the \textbf{\emph{Neumann connection}} in \cite{VaKaMa2009}.  This connection seemed to have appeared first in  \cite{freeboundary} (see also \cite{Koiller1987}) and essentially encodes the effects of the rigid body on the ambient fluid.

The connection one-form $\mathcal{A}: TQ \rightarrow \mathfrak{X}_{\mathrm{vol}}(\mathcal{F}_0)$ of the Neumann connection is defined in terms of the  Helmholtz-Hodge decomposition \eqref{eq:u} of vector fields: if $(g, \varphi, \zeta, \mathbf{u})$ is an element of $T_{(g, \varphi)} Q$, then 
\begin{equation} \label{connform}
	\mathcal{A}(g, \varphi, \zeta, \mathbf{u}) = \varphi^\ast \mathbf{u}_{\mathrm{v}},
\end{equation}
where $\mathbf{u}_{\mathrm{v}}$ is the divergence-free part of the Eulerian velocity $\mathbf{u}$ in the Helmholtz-Hodge decomposition \eqref{eq:u}.  It can be shown (see \cite{VaKaMa2009}) that $\mathcal{A}$ satisfies the requirements of a connection one-form, and that $\mathcal{A}$ is invariant under the action of $SE(2)$ on $Q$:
\[
	\mathcal{A}_{(g, \varphi)}( T\psi_h(\zeta, \mathbf{u})) = \mathcal{A}_{(g, \varphi)}(\zeta, \mathbf{u}) 
\]
for all  $(\zeta, \mathbf{u}) \in T_{(g, \varphi)}Q$.  Here $\psi_h = \psi(h, \cdot)$, with $\psi$ the $SE(2)$-action \eqref{seaction}. Given the exact form of the Neumann connection, one can compute its curvature, which is a two-form $\mathcal{B}$ on the total space $Q$ with values in $\mathfrak{X}_{\mathrm{vol}}(\mathcal{F}_0)$. It turns out that there exists a closed-form formula for the curvature, which was first determined by \cite{MontgomeryThesis}
and further generalized by \cite{VaKaMa2009}.  More precisely, we compute an expression for the contraction $\left< \mu, \mathcal{B} \right>$, where $\mu$ is an arbitrary element of the dual space $\mathfrak{X}^\ast_{\mathrm{vol}}(\mathcal{F}_0)$.

\begin{proposition} \label{prop:curv} Let $(\zeta_1, \mathbf{u}_1)$ and $(\zeta_2, \mathbf{u}_2)$ be elements of
  $T_{(g, \varphi)} Q$ and denote the solutions of \eqref{Neumann} associated to $\zeta_1$ resp. $\zeta_2$ by $\Phi_1$ and $\Phi_2$.  Let $\mu$ be an arbitrary element of $\mathfrak{X}^\ast_{\mathrm{vol}}(\mathcal{F}_0)$. Then the $\mu$-component of
  the curvature $\mathcal{B}$ is given by
  \begin{equation} \label{curvature} \left<\mu, \mathcal{B}_{(g,
        \varphi)}((\zeta_1, \mathbf{u}_1), (\zeta_2, \mathbf{u}_2))\right> = \left\langle \! \left\langle\mu, \mathbf{d} \Phi_1 \wedge \mathbf{d}
      \Phi_2\right\rangle \!
  \right\rangle-\oint_{\partial \mathcal{B}} \alpha \wedge \ast(\mathbf{d}
    \Phi_1 \wedge \mathbf{d} \Phi_2),
  \end{equation}
  where $\left\langle \! \left\langle\cdot, \cdot\right\rangle \!
  \right\rangle$ is the metric on the space of
  forms on $\mathcal{F}$ defined in (\ref{formnorm}).
  \end{proposition}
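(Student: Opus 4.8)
The plan is to obtain \eqref{curvature} from the structure equation of the Neumann connection rather than by guessing a closed form. Since the curvature $\mathcal{B}$ is tensorial, I may evaluate it on the horizontal lifts $X_1^h, X_2^h$ of any vector fields on $SE(2)$ extending $\zeta_1, \zeta_2$. Because $\mathcal{A}$ annihilates horizontal vectors and the algebraic term $[\mathcal{A}(X_1^h), \mathcal{A}(X_2^h)]$ vanishes, the structure equation collapses to
\[
  \mathcal{B}_{(g,\varphi)}(X_1^h, X_2^h) = -\mathcal{A}([X_1^h, X_2^h]).
\]
By \eqref{connform} the Eulerian velocity carried by $X_i^h$ is the potential flow $\nabla\Phi_i$ solving \eqref{Neumann}, so the whole computation reduces to identifying the divergence-free part of the Eulerian velocity attached to the bracket $[X_1^h, X_2^h]$ and then pairing it against $\mu$.

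Next I would carry out the bracket. Flowing $X_2^h$ along $X_1^h$ moves the body and hence deforms the fluid domain $\mathcal{F}$, which in turn deforms the potential $\Phi_2$; the crucial point is that the variation $\delta\Phi_2$ stays harmonic, so its contribution $\nabla(\delta\Phi_2)$ is again a gradient, i.e. horizontal, and is killed by $\mathcal{A}$. After antisymmetrization only the genuine Lie bracket of the Eulerian fields survives, and (tracking the sign produced by the right action of the relabeling group) one is left with $\pm[\nabla\Phi_1, \nabla\Phi_2]$. The key computational observation, which uses that $\Phi_1$ and $\Phi_2$ are harmonic, is the pointwise identity
\[
  [\nabla\Phi_1, \nabla\Phi_2]^\flat = \ast\,\mathbf{d}\bigl(\ast(\mathbf{d}\Phi_1\wedge\mathbf{d}\Phi_2)\bigr),
\]
exhibiting $[\nabla\Phi_1, \nabla\Phi_2]$ as the skew-gradient of the Jacobian $\ast(\mathbf{d}\Phi_1\wedge\mathbf{d}\Phi_2)$; in particular it is divergence-free, so $\mathcal{A}([X_1^h, X_2^h]) = \pm\varphi^\ast[\nabla\Phi_1, \nabla\Phi_2]$ with no further interior Hodge projection needed.

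It then remains to pair with $\mu$ and integrate by parts. Writing the pairing of $\mu \in \mathfrak{X}^\ast_{\mathrm{vol}}(\mathcal{F}_0)$ with a divergence-free field as the $L_2$-pairing of its representative one-form $\alpha$ with $[\nabla\Phi_1, \nabla\Phi_2]^\flat$ (as in Proposition~\ref{prop:vortmom}), substituting the identity above, and using $\ast\ast = -1$ on one-forms in two dimensions, a single application of Stokes' theorem splits the integral into a bulk term $\langle\!\langle \mu, \mathbf{d}\Phi_1 \wedge \mathbf{d}\Phi_2\rangle\!\rangle$ — interpreted as the $L_2$-inner product of the vorticity two-form of $\mu$ with $\mathbf{d}\Phi_1\wedge\mathbf{d}\Phi_2$ — and a boundary term along $\partial\mathcal{B}$. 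The boundary at infinity contributes nothing because the Jacobian decays there, and fixing the orientation of $\partial\mathcal{B}$ together with the signs bookkept above yields \eqref{curvature}.

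The main obstacle is the bracket computation of the second paragraph: taking Lie brackets of horizontal vector fields on the infinite-dimensional manifold $Q$ is delicate, both because the velocity potentials depend on the configuration through the moving domain and because the relabeling group acts on the right, flipping the sign of the Eulerian bracket. Establishing rigorously that the domain-variation terms are horizontal (hence annihilated by $\mathcal{A}$) and controlling the signs is precisely the analysis settled by \cite{MontgomeryThesis} and \cite{VaKaMa2009}; granting that, the remaining ingredients are the elementary harmonic identity and the integration by parts sketched above.
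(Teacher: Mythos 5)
You cannot be checked against an in-paper argument here: the paper states Proposition~\ref{prop:curv} without proof, importing it from \cite{MontgomeryThesis} and \cite{VaKaMa2009}, and your outline (structure equation on horizontal lifts, bracket computation, harmonic identity, integration by parts) is indeed the route taken in those references. Several of your ingredients check out: the reduction to $\mathcal{B}(X_1^h,X_2^h) = -\mathcal{A}([X_1^h,X_2^h])$, the fact that the domain-variation terms are gradients of harmonic functions and hence have vanishing vortical part, and the pointwise identity $[\nabla\Phi_1,\nabla\Phi_2]^\flat = \ast\,\mathbf{d}\bigl(\ast(\mathbf{d}\Phi_1\wedge\mathbf{d}\Phi_2)\bigr)$ for harmonic, curl-free fields (a direct computation with the Cauchy--Riemann-type relations confirms it), followed by Stokes' theorem, which does produce both terms of \eqref{curvature}.

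There is, however, a genuine gap at the step where you conclude that, because $[\nabla\Phi_1,\nabla\Phi_2]$ is divergence-free, $\mathcal{A}([X_1^h,X_2^h]) = \pm\varphi^\ast[\nabla\Phi_1,\nabla\Phi_2]$ with \emph{no further interior Hodge projection needed}. The connection \eqref{connform} takes values in $\mathfrak{X}_{\mathrm{vol}}(\mathcal{F}_0)$, whose elements are divergence-free \emph{and tangent to $\partial\mathcal{B}$}. Being the skew-gradient of $J := \ast(\mathbf{d}\Phi_1\wedge\mathbf{d}\Phi_2)$, the bracket $[\nabla\Phi_1,\nabla\Phi_2]$ is tangent to $\partial\mathcal{B}$ only if $J$ is constant along $\partial\mathcal{B}$, and this fails in general (in particular when rotations are involved: compare the manifestly non-constant boundary expression computed in the proof of Proposition~\ref{prop:gammacurv}). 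Consequently the vertical part of the bracket is $[\nabla\Phi_1,\nabla\Phi_2] - \nabla h$, where $h$ solves the Neumann problem with boundary data $[\nabla\Phi_1,\nabla\Phi_2]\cdot\mathbf{n}$, and your pairing computation silently discards the correction $\int_{\mathcal{F}}\alpha(\nabla h)\,dV$. This correction does vanish, but only when the representative one-form $\alpha$ of $\mu$ is chosen co-closed and tangent to the boundary, since then $\int_{\mathcal{F}}\alpha(\nabla h)\,dV = \oint_{\partial\mathcal{B}} h\,\alpha(\mathbf{n})\,dS - \int_{\mathcal{F}} h\,\boldsymbol{\delta}\alpha\,dV = 0$. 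This is not optional bookkeeping: the right-hand side of \eqref{curvature} is itself not invariant under $\alpha \mapsto \alpha + \mathbf{d}f$ (the boundary integral shifts by $\oint_{\partial\mathcal{B}} J\,\mathbf{d}f$), so the formula tacitly presupposes exactly this canonical representative; without stating and using that choice, your argument as written establishes the formula only up to an uncontrolled gradient term.
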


In what follows, it will be necessary to have an expression for the curvature in terms of the elementary stream functions, rather than the elementary velocity potentials.  Such an expression can be easily obtained by noting that the stream function $\Psi$ is a harmonic conjugate to the velocity potential $\Phi$.  In particular, if $\Phi_1, \Phi_2$ are the velocity potentials introduced in the statement of proposition~\ref{prop:curv}, with their associated stream functions $\Psi_1, \Psi_2$, then 
$\mathbf{d}\Psi_1 \wedge \mathbf{d}\Psi_2 = \mathbf{d}\Phi_1 \wedge \mathbf{d}\Phi_2$.   The $\mu$-component of the curvature (\ref{curvature}) hence becomes 
\begin{align}
\left<\mu, \mathcal{B}\right> 
    & =  \left\langle \! \left\langle\mu, \mathbf{d} \Psi_1 \wedge \mathbf{d}
      \Psi_2\right\rangle \!
  \right\rangle  -\oint_{\partial \mathcal{B}} \alpha \wedge \ast(\mathbf{d}
    \Psi_1 \wedge \mathbf{d} \Psi_2). \label{curvaturestream}
\end{align}

We established the structure of the fluid-solid configuration space $Q$ as the total space of a principal fiber bundle with structure group $\mathrm{Diff}_{\mathrm{vol}}(\mathcal{F}_0)$.  We also  showed that there exists a distinguished connection on this bundle, which is invariant under the action of $SE(2)$ on $Q$.   In order to reduce the system and derive the equations of motion \eqref{EoM1}, we will need the framework of \emph{cotangent bundle reduction},  which we now describe.  More information and proofs of the results quoted below can be found in \cite{MaPe2000} and \cite{MarsdenHamRed}.


\subsection{Cotangent Bundle Reduction: Some General Results}

We describe cotangent bundle reduction in a general context.  Let the unreduced  configuration space be denoted by $Q$ and assume that a Lie group $G$ acts freely and properly on $Q$ from the right, so that the quotient space $Q/G$ is a manifold.   Furthermore, we assume that the quotient space $Q/G$ is equal to a second Lie group $H$, which acts on $Q$ from the left.  In Section~\ref{sec:red}, $Q$ will be the fluid-solid configuration space, $G$ will be the group 
$\mathrm{Diff}_{\mathrm{vol}}(\mathcal{F}_0)$ and $H$ will be the special Euclidian group $SE(2)$.  Our first goal is to reduce by the $G$-action and describe the reduced phase space.   This is described in theorem~\ref{thm:cotbundle}.  In the second stage of the reduction, we will then do Poisson reduction with respect to the residual group $H$ in order to obtain a Poisson structure on the twice reduced space.  This is described below in theorem~\ref{thm:poisson}.

Recall that the quotient projection $\mathrm{pr}_{Q, G}: Q \rightarrow Q/G$ defines a right principal fiber bundle, and assume that a connection on this fiber bundle is given, with connection one-form $\mathcal{A}: TQ \rightarrow \mathfrak{g}$.  For more information about principal fiber bundles and connections, see \cite{KN1}.
The group $G$ acts on $T^\ast Q$ by cotangent lifts, and we denote the momentum map of this action by $J: T^\ast Q\rightarrow \mathfrak{g}^\ast$. 
%
The next theorem characterizes the reduced phase space $(T^\ast Q)_\mu := J^{-1}(\mu)/G_\mu$, where $\mu$ is an arbitrary element of $\mathfrak{g}^\ast$.  Here, $G_\mu$ is the isotropy subgroup of $\mu$, defined as follows: $g \in G_\mu$ if $\mathrm{Ad}_g^\ast \mu = \mu$, where $\mathrm{Ad}_g^\ast$ denotes the coadjoint action of $G$ on $\mathfrak{g}^\ast$.  The reduced phase space $(T^\ast Q)_\mu$ can be described in full generality, but we focus here on the special case where the isotropy group $G_\mu$ is the full symmetry group: $G_\mu = G$.  We will see in section~\ref{sec:diffred} that  this is the relevant case to consider for the rigid body with circulation.

\begin{theorem} \label{thm:cotbundle}
	Let $G$ be a group acting freely and properly from the right on a manifold $Q$ so that $\mathrm{pr}_{Q, G}: Q \rightarrow Q/G$ is a principal fiber bundle.  Let $\mathcal{A}: TQ \rightarrow \mathfrak{g}$ be a connection one-form on this bundle.   Let $\mu \in \mathfrak{g}^\ast$ and assume that $G_\mu = G$.  
	
	Then there is a symplectic diffeomorphism between $(T^\ast Q)_\mu$ and $T^\ast (Q/G)$, the latter with symplectic form $\omega_{\mathrm{can}} - B_\mu$; here $\omega_{\mathrm{can}}$ is the canonical symplectic form on $T^\ast (Q/G)$ and $B_\mu = \mathrm{pr}^\ast_{Q/G} \beta_\mu$, where 
$\mathrm{pr}_{Q/G}: T^\ast (Q/G) \rightarrow Q/G$ is the cotangent bundle projection, and $\beta_\mu$ 
is determined through	
	\begin{equation} \label{magform}
		\mathrm{pr}^\ast_{Q, G} \beta_\mu = \mathbf{d}\left\langle \mu, \mathcal{A}\right\rangle.
	\end{equation}
\end{theorem}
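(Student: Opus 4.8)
The plan is to prove the theorem by the \emph{momentum shift} (fiber translation) technique, which reduces the stated magnetic isomorphism to the much simpler statement of cotangent bundle reduction at the zero level. The key object is the one-form $\alpha_\mu := \langle \mu, \mathcal{A}\rangle \in \Omega^1(Q)$ obtained by pairing the fixed element $\mu \in \mathfrak{g}^\ast$ with the connection. I would first record two of its properties. Since $\mathcal{A}$ reproduces infinitesimal generators, $\mathcal{A}(\xi_Q) = \xi$ for $\xi \in \mathfrak{g}$, so $\langle \alpha_\mu(q), \xi_Q(q)\rangle = \langle \mu, \xi\rangle$; by the standard formula for the momentum map of a cotangent-lifted action this says $J(\alpha_\mu(q)) = \mu$ for every $q$, i.e.\ $\alpha_\mu$ is a section of $T^\ast Q$ taking values in $J^{-1}(\mu)$. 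Second, the equivariance of the connection, $R_g^\ast \mathcal{A} = \mathrm{Ad}_{g^{-1}} \circ \mathcal{A}$, combined with the hypothesis $G_\mu = G$ (so that $\mathrm{Ad}_g^\ast \mu = \mu$), shows that $\alpha_\mu$ is $G$-invariant. This is the one and only place where the isotropy assumption is essential.

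Next I would introduce the fiber translation $t_{\alpha_\mu}: T^\ast Q \to T^\ast Q$, $t_{\alpha_\mu}(\beta_q) = \beta_q + \alpha_\mu(q)$, a diffeomorphism of $T^\ast Q$. Because $J$ is fiberwise affine and $J(\alpha_\mu(q)) = \mu$, the map $t_{\alpha_\mu}$ carries the level set $J^{-1}(0)$ diffeomorphically onto $J^{-1}(\mu)$. The $G$-invariance of $\alpha_\mu$ makes $t_{\alpha_\mu}$ a $G$-equivariant map, so it descends to a diffeomorphism of quotients
\begin{equation*}
	\bar{t}_{\alpha_\mu}: J^{-1}(0)/G \longrightarrow J^{-1}(\mu)/G = (T^\ast Q)_\mu.
\end{equation*}
It then remains to identify the source with $T^\ast(Q/G)$ and to track the symplectic forms through $\bar{t}_{\alpha_\mu}$.

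For the source I would invoke the classical reduction-at-zero statement: a covector lies in $J^{-1}(0)$ precisely when it annihilates the vertical subspace $\ker T\mathrm{pr}_{Q,G}$, so $J^{-1}(0)$ is naturally and $G$-equivariantly identified with the pullback of $T^\ast(Q/G)$, and passing to the quotient yields a symplectic diffeomorphism $(J^{-1}(0)/G, \omega_0) \cong (T^\ast(Q/G), \omega_{\mathrm{can}})$. For the symplectic forms I would use the translation identity for the canonical one-form $\theta_{\mathrm{can}}$, namely $t_{\alpha_\mu}^\ast \theta_{\mathrm{can}} = \theta_{\mathrm{can}} + \pi_Q^\ast \alpha_\mu$; with $\omega_{\mathrm{can}} = -\mathbf{d}\theta_{\mathrm{can}}$ this gives $t_{\alpha_\mu}^\ast \omega_{\mathrm{can}} = \omega_{\mathrm{can}} - \pi_Q^\ast \mathbf{d}\alpha_\mu$, producing exactly the magnetic term. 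Finally I must show that $\mathbf{d}\alpha_\mu = \mathbf{d}\langle \mu, \mathcal{A}\rangle$ is basic and hence descends to a two-form $\beta_\mu$ on $Q/G$ satisfying $\mathrm{pr}^\ast_{Q,G}\beta_\mu = \mathbf{d}\langle\mu,\mathcal{A}\rangle$, which is \eqref{magform}: invariance is inherited from $\alpha_\mu$, and horizontality follows from Cartan's formula, since $i_{\xi_Q}\mathbf{d}\alpha_\mu = \mathcal{L}_{\xi_Q}\alpha_\mu - \mathbf{d}\langle\mu,\xi\rangle = 0$ (the Lie derivative vanishes by invariance and $\langle\mu,\xi\rangle$ is constant). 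Assembling these pieces identifies $(T^\ast Q)_\mu$ symplectically with $\bigl(T^\ast(Q/G), \omega_{\mathrm{can}} - \mathrm{pr}^\ast_{Q/G}\beta_\mu\bigr)$.

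The main obstacle I anticipate is the careful bookkeeping of the symplectic forms through the chain of identifications, ensuring that the quotient symplectic form $\omega_0$ on $J^{-1}(0)/G$ really pulls back to $\omega_{\mathrm{can}}$ on $T^\ast(Q/G)$ and that the translation term lands on the base as $B_\mu = \mathrm{pr}^\ast_{Q/G}\beta_\mu$ with the correct sign. A secondary subtlety, given that here $G = \mathrm{Diff}_{\mathrm{vol}}(\mathcal{F}_0)$ is infinite-dimensional, is that the freeness, properness, and smooth-quotient hypotheses underlying reduction must be read formally or verified in the appropriate functional-analytic setting; the algebraic argument above is, however, insensitive to dimension.
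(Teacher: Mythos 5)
Your proposal is correct and takes essentially the same approach as the paper: the paper's outline uses the same momentum shift $\mathrm{Shift}_\mu(\alpha_q) = \alpha_q - \left\langle \mu, \mathcal{A}(q)\right\rangle$ (yours is its inverse, translating the zero level onto the $\mu$-level) composed with the same annihilator-of-vertical identification $\bar{\varphi}_0 : J^{-1}(0) \rightarrow T^\ast(Q/G)$, deferring the symplectic verification to the reduction-by-stages reference. Your fiber-translation identity $t_{\alpha_\mu}^\ast \theta_{\mathrm{can}} = \theta_{\mathrm{can}} + \pi_Q^\ast \alpha_\mu$, the Cartan-formula argument that $\mathbf{d}\left\langle \mu, \mathcal{A}\right\rangle$ is basic, and the use of $G_\mu = G$ for invariance of $\alpha_\mu$ correctly supply exactly the details the paper cites rather than proves.
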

\begin{proof}[\textrm{\textbf{Outline of the Proof}}]
This is a special case of theorem~2.3.3 in \cite{MarsdenHamRed}.  We just recall the explicit form of the isomorphism between $(T^\ast Q)_\mu$ and $T^\ast (Q/G)$; the proof that this map also preserves the relevant symplectic structures can be found in \cite{MarsdenHamRed}.

The isomorphism $\varphi_\mu : (T^\ast Q)_\mu \rightarrow T^\ast (Q/G)$ is the composition of the map ${\mathrm{shift}}_\mu : (T^\ast Q)_\mu  \rightarrow (T^\ast Q)_0$ and the map $\varphi_0:  (T^\ast Q)_0 \rightarrow T^\ast (Q/G)$:
\begin{equation} \label{rediso}
	\varphi_\mu = \varphi_0 \circ {\mathrm{shift}}_\mu.
\end{equation}
Both of these constitutive maps are isomorphisms.   The map ${\mathrm{shift}}_\mu$ is defined as follows: we first introduce a map ${\mathrm{Shift}}_\mu : J^{-1}(\mu) \rightarrow J^{-1}(0)$ by 
\[
	{\mathrm{Shift}}_\mu(\alpha_q) = \alpha_q - \left< \mu, \mathcal{A}(q) \right>.
\]
It can easily be verified that ${\mathrm{Shift}}$ is $G$-invariant so that it drops to a quotient map ${\mathrm{shift}}_\mu: (T^\ast Q)_\mu \rightarrow (T^\ast Q)_0$.

Secondly, the map $\varphi_0: (T^\ast Q)_0 \rightarrow T^\ast (Q/G)$ is defined by noting that 
\[
	J^{-1}(0) = \{  \alpha_q \in T^\ast Q: \left< \alpha_q , \xi_Q(q) \right> = 0 \quad \text{for all $\xi \in \mathfrak{g}$} \}
\]
so that the map $\bar{\varphi}_0 : J^{-1}(0) \rightarrow T^\ast(Q/G)$ given by 
\begin{equation} \label{barphi}
	\left< \bar{\varphi}_0(\alpha_q) , T \pi_{Q, G}(v_q) \right> = \left<\alpha_q, v_q \right> 
\end{equation}
is well defined.  The map $\bar{\varphi}_0$ is easily seen to be $G$-invariant and surjective, and hence induces a quotient map $\varphi_0: (T^\ast Q)_0 \rightarrow T^\ast (Q/G)$.
\end{proof}

Note that the isomorphism between $(T^\ast Q)_\mu$ and $T^\ast Q/G$ is connection-dependent.  As a result, the reduced symplectic form on $T^\ast Q/G$ is modified by the two-form $\beta_\mu$, which is traditionally referred to as a \emph{\textbf{magnetic term}} since it also appears in the description of a charged particle in a magnetic field (see \cite{GuSt1984}).

Having described the reduced phase space $(T^\ast Q)_\mu$, we now take into account the assumption made earlier that the base space $Q/G$ has the structure of a second Lie group $H$, which acts on $Q$ from the \emph{left} and leaves the connection one-form $\mathcal{A}$ invariant.  In this case, the reduced phase space $(T^\ast Q)_\mu$ is equal to $T^\ast H$, equipped with the magnetic symplectic structure described before.  As a result, $H$ acts on $T^\ast H$, and it can be checked that this action leaves the symplectic structure invariant.  It would now be possible to do symplectic reduction as above for the $H$-action as well to obtain a fully reduced symplectic structure.  However, all that is needed in section~\ref{sec:eucsymm} is an expression for the reduced Poisson structure, described in the following theorem.

\begin{theorem}[Theorem~7.2.1 in \cite{MarsdenHamRed}] \label{thm:poisson}
  The Poisson reduced space for the left cotangent lifted action of
  $H$ on $(T^\ast H, \omega_{\mathrm{can}} - B_\mu)$ is $\mathfrak{h}^\ast$ with
  Poisson bracket given by
\begin{equation} \label{bracket}
\{f, g\}_\mathcal{B}(\mu) = -\left< \mu, \left[ \frac{\delta f}{\delta \mu},
    \frac{\delta g}{\delta \mu} \right] \right> - {B}_\mu(e)\left(
  \frac{\delta f}{\delta \mu}, \frac{\delta g}{\delta \mu} \right)
\end{equation}
for $f, g \in C^\infty(\mathfrak{h}^\ast)$.
\end{theorem}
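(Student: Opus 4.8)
The plan is to obtain the bracket \eqref{bracket} by \emph{Poisson reduction}. The group $H$ acts freely and properly on $T^\ast H$ by cotangent lifts of left translations, so once I check that this action preserves the magnetic symplectic form $\omega_{\mathrm{can}} - B_\mu$, the quotient $(T^\ast H)/H$ inherits a unique Poisson structure characterized by $\{\bar f, \bar g\}\circ\pi = \{f\circ\pi, g\circ\pi\}$ for the projection $\pi: T^\ast H \to (T^\ast H)/H$. The first step is then to identify $(T^\ast H)/H$ with $\mathfrak{h}^\ast$ through the left-trivialization $\alpha_h \mapsto \nu := (T_e L_h)^\ast\alpha_h$, under which left translations act only on the $H$-factor and $\nu$ is invariant; functions on $\mathfrak{h}^\ast$ thereby correspond to left-invariant functions on $T^\ast H$.

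The invariance of the symplectic form is where the hypotheses enter. I would argue that $B_\mu = \mathrm{pr}^\ast_{Q/G}\beta_\mu$ is $H$-invariant because $\beta_\mu$ is a \emph{left-invariant} two-form on $H$. Indeed, the Neumann connection $\mathcal{A}$ is $SE(2)$-invariant, so $\langle\mu,\mathcal{A}\rangle$ is an $H$-invariant one-form on $Q$; since the projection $\mathrm{pr}_{Q,G}: Q \to Q/G = H$ is $H$-equivariant, intertwining the left $H$-action on $Q$ with left translation on $H$, the defining relation \eqref{magform} forces $\beta_\mu$ to be left-invariant. Consequently $\beta_\mu$ is determined entirely by its value $\beta_\mu(e) = B_\mu(e)$ on $\mathfrak{h}\times\mathfrak{h}$, and $\omega_{\mathrm{can}} - B_\mu$ is $H$-invariant, so the reduction is well defined.

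It remains to compute the reduced bracket and split it into its two contributions. The canonical part is handled by the classical Lie--Poisson reduction of $T^\ast H$: reduction by the \emph{left} action produces the minus Lie--Poisson bracket $-\langle\mu, [\delta f/\delta\mu, \delta g/\delta\mu]\rangle$, which is exactly the first term of \eqref{bracket}. For the magnetic correction I would observe that, because $B_\mu$ is pulled back from the base $H$, it pairs only the base (i.e.\ $H$-)components of the Hamiltonian vector fields $X_f, X_g$ of $f\circ\pi$ and $g\circ\pi$. At a covector over $h$ projecting to $\mu$, these base components are the left-translates $T_eL_h(\delta f/\delta\mu)$ and $T_eL_h(\delta g/\delta\mu)$, as in the standard Lie--Poisson computation. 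Feeding them into $\beta_\mu$ and using left-invariance to transport the evaluation to the identity gives $\beta_\mu(e)(\delta f/\delta\mu, \delta g/\delta\mu)$; the sign of the $-B_\mu$ term then delivers the second summand $-B_\mu(e)(\delta f/\delta\mu, \delta g/\delta\mu)$.

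The main obstacle is precisely this magnetic computation: one must unwind the left-trivialization to confirm that the base component of the reduced Hamiltonian vector field of $f$ at a point over $\mu$ is the left-translate $T_eL_h(\delta f/\delta\mu)$, and track all signs in passing from $\omega_{\mathrm{can}} - B_\mu$ to its Poisson bivector. Since \eqref{bracket} is just the $(-)$ Lie--Poisson bracket deformed by the constant two-cocycle $B_\mu(e)$ on $\mathfrak{h}$, a cleaner route --- the one implicit in the cited Theorem~7.2.1 of \cite{MarsdenHamRed} --- is to invoke the general magnetic cotangent-bundle reduction theorem and reduce the problem to the single verification that the $H$-invariant magnetic two-form descends to this constant cocycle, which is immediate from the left-invariance of $\beta_\mu$ established above.
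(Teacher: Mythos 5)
Your proposal is correct in substance, but it necessarily takes a different route from the paper, because the paper does not prove this statement at all: Theorem~\ref{thm:poisson} is imported from Theorem~7.2.1 of \cite{MarsdenHamRed}, the only proof-related content being the remark that the cited result concerns right actions and carries over to left actions \emph{mutatis mutandis}. What you supply is a self-contained reconstruction, and its ingredients are sound: the identification of $(T^\ast H)/H$ with $\mathfrak{h}^\ast$ via left trivialization (the body momentum being invariant under cotangent-lifted left translation); the left-invariance of $\beta_\mu$, deduced from the $H$-invariance of $\mathcal{A}$, the $H$-equivariance of $\mathrm{pr}_{Q,G}$, and the injectivity of $\mathrm{pr}^\ast_{Q,G}$ in \eqref{magform} --- this actually fills in the paper's unproved aside that the $H$-action can be checked to preserve the magnetic symplectic form; and the key observation that the base component of the Hamiltonian vector field of a left-invariant function is insensitive to the magnetic term, since $B_\mu$ is pulled back from $H$ and hence vanishes on vertical vectors. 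One caution on the final computation: the splitting into a canonical Lie--Poisson part plus a magnetic correction is looser than your phrasing suggests, because the fiber components of the magnetic Hamiltonian vector fields themselves acquire $\beta_\mu$-dependent terms, so $\omega_{\mathrm{can}}(X_f, X_g)$ evaluated on the magnetic vector fields is not the Lie--Poisson bracket on the nose; cross terms appear and must cancel against $-B_\mu(X_f, X_g)$. The reliable procedure --- which your closing paragraph correctly identifies as the remaining work --- is to solve $\mathbf{i}_{X_f}(\omega_{\mathrm{can}} - B_\mu) = \mathbf{d}f$ for both components of $X_f$ in the trivialization and only then evaluate the bracket, after which \eqref{bracket} drops out with the stated signs. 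A side benefit of your approach is that proving the left-action statement directly avoids the left/right adaptation the paper merely asserts.
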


The theorem in \cite{MarsdenHamRed} is proved for right actions,
whereas the action of $H$ here is assumed to be from the left.   However, the
same proof continues to hold, \emph{mutatis mutandis}.  

Lastly, we recall the definition of the $\mathcal{B}\mathfrak{h}$-potential (see \cite{MarsdenHamRed}), which plays the role of momentum map relative to the magnetic term.  While this potential can be defined on an arbitrary manifold with a magnetic symplectic form, we treat just the case of a Lie group $H$ with left-invariant symplectic form $\omega_{\mathrm{can}} - B_\mu$ as before.

\begin{definition}  \label{def:bgpot} Let $H$ be a Lie group with left-invariant symplectic form $\omega_{\mathrm{can}} - B_\mu$.
	Suppose there exists a smooth map $\psi: H \rightarrow \mathfrak{h}^\ast$ such that 
	\begin{equation} \label{defbgpot}
		\mathbf{i}_{\xi_H} B_\mu = \mathbf{d} \left<\psi, \xi \right>,
	\end{equation}
	for all $\xi \in \mathfrak{h}$.  Then the map $\psi$ is called the $\mathcal{B}\mathfrak{h}$-potential of the $H$-action, relative to the magnetic term $B_\mu$.
\end{definition}


In what follows, we will always assume that $\psi$ exists.  In this case, $\psi$ is defined up to an arbitrary constant, which we normalize by assuming that $\psi(e) = 0$.  Under this assumption,  the \emph{non-equivariance one-cocycle} $\sigma: H \rightarrow \mathfrak{h}$ associated to $\psi$,
\begin{equation} \label{onecocycle}
	\sigma(g) = \psi(g) - \mathrm{Ad}^\ast_{g^{-1}} \psi(e), 
\end{equation}
coincides with $\psi$.  We will make no further distinction between $\psi$ and $\sigma$.  The importance of $\psi$ lies in the fact that we may characterize the symplectic leaves of the magnetic bracket \eqref{bracket} in $\mathfrak{h}^\ast$ as orbits of a certain affine action of $H$ on $\mathfrak{h}^\ast$.
\begin{proposition} \label{prop:afforbit}
	Let  $(T^\ast H, \omega_{\mathrm{can}} - B_\mu)$ be a Lie group with a magnetic symplectic form.
	The symplectic leaves in $\mathfrak{h}^\ast$ of the magnetic Poisson structure  \eqref{bracket} are the orbits of the  following affine action:
	\begin{equation} \label{affaction}
		g \cdot \mu = \mathrm{Ad}^\ast_{g^{-1}} \mu + \psi(g),
	\end{equation}
	where $g \in H, \mu \in \mathfrak{h}^\ast$.
\end{proposition}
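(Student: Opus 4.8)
The plan is to identify the symplectic leaves of the magnetic Poisson bracket \eqref{bracket} with the maximal integral manifolds of its characteristic distribution---the distribution on $\mathfrak{h}^\ast$ spanned at each point by the values of the Hamiltonian vector fields---and then to show that this distribution coincides, point by point, with the distribution tangent to the orbits of the affine action \eqref{affaction}. Since a genuine group action has orbits that are exactly the maximal integral manifolds of the (involutive) distribution spanned by its infinitesimal generators, matching the two distributions forces leaves and orbits to agree. The whole argument thus reduces to a comparison of two explicit families of vectors in $T_\mu\mathfrak{h}^\ast\cong\mathfrak{h}^\ast$, and only the spans matter, so overall sign conventions in the definition of Hamiltonian vector fields are immaterial.

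First I would compute the Hamiltonian vector field $X_f$ of a function $f\in C^\infty(\mathfrak{h}^\ast)$ with respect to \eqref{bracket}. Writing $\xi=\delta f/\delta\mu\in\mathfrak{h}$ and pairing the vector $X_f(\mu)\in\mathfrak{h}^\ast$ against an arbitrary $\eta=\delta g/\delta\mu$, the bracket formula gives
\begin{align*}
  \langle X_f(\mu),\eta\rangle
   &= -\langle\mu,[\xi,\eta]\rangle - B_\mu(e)(\xi,\eta) \\
   &= -\langle\mathrm{ad}^\ast_\xi\mu,\eta\rangle - B_\mu(e)(\xi,\eta).
\end{align*}
As $\eta$ is arbitrary this yields $X_f(\mu) = -\mathrm{ad}^\ast_\xi\mu - B_\mu(e)(\xi,\cdot)$, and as $f$ ranges over all smooth functions $\xi=\delta f/\delta\mu$ ranges over all of $\mathfrak{h}$. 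Hence the characteristic distribution at $\mu$ is $\{\,-\mathrm{ad}^\ast_\xi\mu - B_\mu(e)(\xi,\cdot) : \xi\in\mathfrak{h}\,\}\subseteq\mathfrak{h}^\ast$.

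Next I would linearize \eqref{affaction}. Differentiating $t\mapsto\exp(t\xi)\cdot\mu = \mathrm{Ad}^\ast_{\exp(-t\xi)}\mu + \psi(\exp(t\xi))$ at $t=0$ gives the infinitesimal generator $\xi_{\mathfrak{h}^\ast}(\mu) = -\mathrm{ad}^\ast_\xi\mu + T_e\psi(\xi)$. Comparing with the previous paragraph, the proposition comes down to the single identity $T_e\psi(\xi) = -B_\mu(e)(\xi,\cdot)$. This is the step I expect to be the main obstacle, chiefly because of bookkeeping: $B_\mu$ lives on $T^\ast H$, whereas $\psi$ and the base two-form $\beta_\mu$ (with $B_\mu=\mathrm{pr}^\ast_{Q/G}\beta_\mu$, so that $B_\mu(e)=\beta_\mu(e)$ as a form on $\mathfrak{h}$) live on $H$, and the two sides of the defining relation \eqref{defbgpot} must be evaluated consistently. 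I would evaluate \eqref{defbgpot} at the identity $e\in H$ on the generator value $\eta_H(e)=\eta$: the left-hand side is $\beta_\mu(e)(\xi,\eta)=B_\mu(e)(\xi,\eta)$, while the right-hand side is $\mathbf{d}\langle\psi,\xi\rangle(e)(\eta)=\langle T_e\psi(\eta),\xi\rangle$. Thus $\langle T_e\psi(\eta),\xi\rangle = B_\mu(e)(\xi,\eta)$ for all $\xi,\eta$, and the antisymmetry of $B_\mu$ converts this into the required $T_e\psi(\xi) = -B_\mu(e)(\xi,\cdot)$.

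Finally I would record, for completeness, that \eqref{affaction} really is a left action: associativity reduces to the non-equivariance cocycle identity $\psi(g_1 g_2) = \psi(g_1) + \mathrm{Ad}^\ast_{g_1^{-1}}\psi(g_2)$ satisfied by $\sigma=\psi$, while the normalization $\psi(e)=0$ supplies the identity axiom. With the characteristic distribution of \eqref{bracket} and the generator distribution of \eqref{affaction} shown to coincide, and the connected orbits of \eqref{affaction} being the maximal integral manifolds of the latter, the symplectic leaves of the magnetic Poisson structure are precisely these affine orbits, which is the assertion of the proposition.
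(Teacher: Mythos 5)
Your proposal is correct, but it is a genuinely different route from the paper's: the paper does not prove this proposition at all, it simply cites Theorem~7.2.2 of the reference \cite{MarsdenHamRed}, whereas you give a self-contained argument. Your computation is sound: pairing the bracket \eqref{bracket} against an arbitrary $\eta = \delta g/\delta\mu$ gives $X_f(\mu) = -\mathrm{ad}^\ast_\xi\mu - B_\mu(e)(\xi,\cdot)$, the generator of \eqref{affaction} is $-\mathrm{ad}^\ast_\xi\mu + T_e\psi(\xi)$, and your evaluation of \eqref{defbgpot} at the identity, $B_\mu(e)(\xi,\eta) = \left\langle T_e\psi(\eta),\xi\right\rangle$, combined with antisymmetry, yields exactly $T_e\psi(\xi) = -B_\mu(e)(\xi,\cdot)$ — so the two distributions coincide on the nose, not merely up to sign (and you are right that only the relative sign between the $\mathrm{ad}^\ast$ and magnetic terms matters, which you do compute carefully). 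What your approach buys is transparency: one sees concretely that the magnetic term in the bracket and the $\mathcal{B}\mathfrak{h}$-potential in the affine action are infinitesimally the same object, which is the conceptual content of the proposition. What the citation buys is coverage of the technical scaffolding you invoke but do not prove: the symplectic stratification theorem (leaves are the maximal integral manifolds of the characteristic distribution), the fact that orbits are maximal integral manifolds of the generator distribution — which requires $H$ connected, true for $SE(2)$ but worth flagging since the proposition is stated for general $H$ — and the one-cocycle identity $\psi(g_1g_2) = \psi(g_1) + \mathrm{Ad}^\ast_{g_1^{-1}}\psi(g_2)$, which you assume but which itself must be derived from the left-invariance of $B_\mu$ and the normalization $\psi(e)=0$.
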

\begin{proof}
	See theorem~7.2.2 in \cite{MarsdenHamRed}.
\end{proof}
%

\section{Derivation of the Chaplygin-Lamb Equations via Reduction by Stages}
\label{sec:red}

\subsection{Reduction by the Diffeomorphism Group} 
\label{sec:diffred}

We impose the condition that the fluid has constant circulation $\Gamma$ by considering the symplectic reduced space $J^{-1}(0, \Gamma)/\mathrm{Diff}_{\mathrm{vol}}(\mathcal{F}_0)$.  Our goal is to use cotangent bundle
reduction to express this space in a more manageable form  (via establishing an isomorphism that is connection-dependent) and to obtain an explicit expression for the reduced symplectic form on this space.

The reduction theorem~\ref{thm:cotbundle} deals with the case where the isotropy group coincides with the full group.  It can  be easily verified that this condition is satisfied for the rigid body with circulation: here, the relevant momentum value is $(0, \Gamma)$, and by \eqref{coad} we have that for all $\phi \in \mathrm{Diff}_{\mathrm{vol}}(\mathcal{F}_0)$, 
\[
	\mathrm{CoAd}_\phi(0, \Gamma) = (0, \Gamma), 
\]
so that $(\mathrm{Diff}_{\mathrm{vol}}(\mathcal{F}_0))_{(0, \Gamma)} = \mathrm{Diff}_{\mathrm{vol}}(\mathcal{F}_0)$.

\paragraph{The Reduced Phase Space.}
As an introduction to the methods of this section, we establish an isomorphism $\varphi_\Gamma$ between the reduced phase space $J^{-1}(0, \Gamma)/\mathrm{Diff}_{\mathrm{vol}}(\mathcal{F}_0)$ and $T^\ast SE(2)$ in a relatively ad-hoc manner.   We then show that cotangent bundle reduction yields precisely this isomorphism.

Let us first introduce the linear isomorphism  
$\mathfrak{m} : \mathfrak{se}(2)^\ast \rightarrow \mathfrak{se}(2)^\ast$ given by 
\begin{align} \label{mapm}
	\mathfrak{m}(\pi_b) & = \pi_b + \mathbb{M}_f \mathbb{M}_b^{-1} \pi_b 
		 =  ( \mathbb{M}_b + \mathbb{M}_f ) \mathbb{M}_b^{-1} \pi_b
\end{align}	
where $\mathbb{M}_b$ and $\mathbb{M}_f$ are the body mass matrix and the added mass matrix, respectively.   The map $\mathfrak{m}$ will prove to be crucial later on; its effect is to redefine the momentum of the rigid body in order to take into account the added mass effects.

We now derive an explicit expression for the level sets $J^{-1}(0)$ and $J^{-1}(0, \Gamma)$ of the vorticity momentum map.  Let $(g, \varphi, \pi_b, \alpha)$ be an element of $T^\ast Q$: the requirement that $J(\pi_b, \alpha) = (0, \Gamma)$ is equivalent to
\[
	\mathbf{d} \alpha = 0 \quad \text{and} \quad \oint_{\partial \mathcal{B}} \varphi^\ast \alpha = \Gamma.
\]
By means of the Hodge decomposition, this implies that 
\[
	\alpha = \alpha_\Gamma + \mathbf{d}\Phi_\zeta, 
\]
where $\alpha_\Gamma$ is given by \eqref{vortalpha} and $\Phi_\zeta$ is the solution of the Neumann problem \eqref{Neumann} with boundary data $\zeta = \mathbb{M}_b^{-1} \pi_b$. Hence, we may define an isomorphism $\psi_\Gamma$ from $J^{-1}(0, \Gamma)$ to $Q \times \mathfrak{se}(2)^\ast$, given by 
\[
	\psi_\Gamma: (g, \varphi, \pi_b, \alpha) \mapsto (g, \varphi, \mathfrak{m}(\pi_b)) \in Q \times \mathfrak{se}(2)^\ast,
\]
where $\mathfrak{m}$ is the map \eqref{mapm}. Likewise, there is an isomorphism $\psi_0 : J^{-1}(0) \rightarrow Q \times \mathfrak{se}(2)^\ast$, obtained by realizing that $ J^{-1}(0)$ consists of elements of the form $(g, \varphi, \pi_b, \mathbf{d} \Phi_\zeta)$, where $\Phi_\zeta$ has the same interpretation as above.

The map $\psi_\Gamma$ is easily seen to be $\mathrm{Diff}_{\mathrm{vol}}(\mathcal{F}_0)$-equivariant and hence drops to a quotient isomorphism $\varphi_\Gamma$ between $ J^{-1}(0, \Gamma)/\mathrm{Diff}_{\mathrm{vol}}(\mathcal{F}_0) $ and $T^\ast SE(2)$, given by 
\[
		\varphi_\Gamma: [(g, \varphi, \pi_b, \alpha)] \mapsto (g, \mathfrak{m}(\pi_b)).
\]
We see that the effect of the map $\varphi_\Gamma$ is to eliminate the influence of the fluid entirely, except for the added mass effects, which are encoded by $\mathfrak{m}$.

\paragraph{The $\Gamma$-Component of the Connection One-Form.}
In order to compute the magnetic term \eqref{magform}, we need an expression for the connection one-form, contracted with the momentum value at which we do reduction.  For the rigid body with circulation, the connection is the Neumann connection, and the momentum value is $(0, \Gamma)$.  Therefore, let $\mathcal{A}$ be the connection one-form of the Neumann connection as in \eqref{connform} and define the $\Gamma$-component of $\mathcal{A}$ to be the one-form $\mathcal{A}_\Gamma: TQ \rightarrow \mathbb{R}$ given by 
\[
	\mathcal{A}_\Gamma(g, \varphi, \zeta, \mathbf{u}) := 
	\left< (0, \Gamma), \mathcal{A}(g, \varphi, \zeta, \mathbf{u})\right>, 
\]
where on the right-hand side we interpret $(0, \Gamma)$ as an element of $\mathfrak{X}_{\mathrm{vol}}(\mathcal{F}_0)^\ast$.   In other words, 
$\mathcal{A}_\Gamma(g, \varphi, \zeta, \mathbf{u})$ computes the divergence-free part of $\mathbf{u}$ and contracts it with the element $(0, \Gamma)$ of  $\mathfrak{X}_{\mathrm{vol}}(\mathcal{F}_0)^\ast$.  This hints at the fact that $\mathcal{A}_\Gamma$ is nothing but the one-form $\alpha_\Gamma$ of \eqref{vortalpha}, which we prove in the next proposition.

\begin{proposition} \label{prop:compalpha}
	The $\Gamma$-component of the connection one-form $\mathcal{A}$ is related to the form $\alpha_\Gamma$ by the following relation: for all $(g, \varphi) \in Q$ and $(\zeta, \mathbf{u}) \in T_{(g, \varphi)} Q$, we have that
	\begin{equation} \label{Aalpha}
		\mathcal{A}_\Gamma(\zeta, \mathbf{u}) = \int_{\mathcal{F}} \alpha_\Gamma(\mathbf{u}) \, dV.
	\end{equation}
\end{proposition}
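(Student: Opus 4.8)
The plan is to unwind the definition of $\mathcal{A}_\Gamma$ into an honest integral over $\mathcal{F}$, and then to recognise the one remaining discrepancy as the $L_2$-orthogonality relation already established for the kinetic energy. The whole proof is essentially a change-of-variables computation resting on one orthogonality fact that is in hand.

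First I would represent the functional $(0, \Gamma) \in \mathfrak{X}^\ast_{\mathrm{vol}}(\mathcal{F}_0)$ by a concrete one-form. Under the isomorphism \eqref{iso}, a class $[\beta]$ of one-forms on $\mathcal{F}_0$ is sent to $(\mathbf{d}\beta, \oint_{\partial\mathcal{B}}\beta)$, so $(0,\Gamma)$ is represented by any \emph{closed} one-form on $\mathcal{F}_0$ whose integral around the body boundary equals $\Gamma$. The pullback $\varphi^\ast \alpha_\Gamma$ is exactly such a representative: $\mathbf{d}(\varphi^\ast \alpha_\Gamma) = \varphi^\ast \mathbf{d}\alpha_\Gamma = 0$ since $\alpha_\Gamma$ is harmonic, and $\oint_{\partial\mathcal{B}} \varphi^\ast \alpha_\Gamma = \oint_{\partial\mathcal{B}} \alpha_\Gamma = \Gamma$ by \eqref{contint}. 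Since $\mathcal{A}(g, \varphi, \zeta, \mathbf{u}) = \varphi^\ast \mathbf{u}_{\mathrm{v}}$ is divergence-free (because $\mathbf{u}_{\mathrm{v}}$ is and $\varphi$ is volume-preserving), we only ever pair $(0,\Gamma)$ against a divergence-free field, so the value is independent of the chosen representative and the $L_2$ pairing used in the proof of Proposition~\ref{prop:vortmom} applies verbatim.

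Next I would compute directly, writing $X = \varphi^\ast \mathbf{u}_{\mathrm{v}}$:
\begin{align*}
\mathcal{A}_\Gamma(\zeta, \mathbf{u}) = \langle (0,\Gamma), \varphi^\ast \mathbf{u}_{\mathrm{v}} \rangle = \int_{\mathcal{F}_0} (\varphi^\ast \alpha_\Gamma)(\varphi^\ast \mathbf{u}_{\mathrm{v}}) \, dV_0 = \int_{\mathcal{F}} \alpha_\Gamma(\mathbf{u}_{\mathrm{v}}) \, dV,
\end{align*}
where the last equality uses naturality of the pullback, $(\varphi^\ast \alpha_\Gamma)(\varphi^\ast \mathbf{u}_{\mathrm{v}}) = \varphi^\ast(\alpha_\Gamma(\mathbf{u}_{\mathrm{v}}))$, together with $\varphi^\ast dV = dV_0$ to change variables back to $\mathcal{F} = \varphi(\mathcal{F}_0)$. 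It then remains only to replace $\mathbf{u}_{\mathrm{v}}$ by the full velocity $\mathbf{u} = \nabla \Phi_\zeta + \mathbf{u}_{\mathrm{v}}$. The discrepancy is $\int_{\mathcal{F}} \alpha_\Gamma(\nabla \Phi_\zeta)\, dV$, which equals the $L_2$ inner product $\left\langle\!\left\langle \alpha_\Gamma, \mathbf{d}\Phi_\zeta \right\rangle\!\right\rangle$, since $\alpha_\Gamma = \mathbf{u}_\Gamma^\flat$ gives $\alpha_\Gamma(\nabla\Phi_\zeta) = \mathbf{u}_\Gamma \cdot \nabla \Phi_\zeta$ pointwise. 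This vanishes because $\alpha_\Gamma$ is harmonic and hence $L_2$-orthogonal to the exact form $\mathbf{d}\Phi_\zeta$ — precisely the orthogonality $\int_{\mathcal{F}} \nabla\Phi_\zeta \cdot \nabla\Phi_\Gamma \, dV = 0$ established in the computation following \eqref{eq:Tf_zeta}. Adding this vanishing term yields $\mathcal{A}_\Gamma(\zeta, \mathbf{u}) = \int_{\mathcal{F}} \alpha_\Gamma(\mathbf{u}) \, dV$.

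The only genuinely delicate point is the first step: making precise that $(0,\Gamma)$ may be represented by $\varphi^\ast \alpha_\Gamma$ in the $L_2$ pairing, and that the inherent representative-dependence of elements of $\mathfrak{X}^\ast_{\mathrm{vol}}(\mathcal{F}_0)$ (ambiguity by exact forms) drops out precisely because $\varphi^\ast \mathbf{u}_{\mathrm{v}}$ is divergence-free. Everything thereafter is a pullback/change-of-variables computation plus the orthogonality relation already proved.
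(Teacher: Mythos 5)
Your proof is correct and follows essentially the same route as the paper's: represent $(0,\Gamma)$ by the one-form $\alpha_\Gamma$, pair it with the divergence-free part $\mathbf{u}_{\mathrm{v}}$ of the velocity, and use the $L_2$-orthogonality of the harmonic form $\alpha_\Gamma$ to exact forms (gradients) to replace $\mathbf{u}_{\mathrm{v}}$ by the full velocity $\mathbf{u}$. The paper's proof is a one-line version of this argument; your added care about representing $(0,\Gamma)$ by $\varphi^\ast\alpha_\Gamma$ on $\mathcal{F}_0$, the representative-independence of the pairing against divergence-free fields, and the change of variables back to $\mathcal{F}$ simply makes explicit what the paper leaves implicit.
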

\begin{proof}
By definition, we then have that $\mathcal{A}_\Gamma$ is given by 
\[
	\mathcal{A}_\Gamma(\zeta, \mathbf{u})  =  \left< (0, \Gamma), \mathcal{A}(\zeta, \mathbf{u}) \right>	
		 = \left<\alpha_\Gamma, \mathbf{u}_{\mathrm{v}} \right>  = \int_{\mathcal{F}} \alpha_\Gamma(\mathbf{u}) \, dV, 
\]
where we have used the 
Helmholtz-Hodge decomposition $\mathbf{u} = \mathbf{u}_{\mathrm{v}} + \nabla \Phi_\zeta$, 
together with the
fact that $\alpha_\Gamma$ satisfies \eqref{contint} and is $L_2$-orthogonal to gradient vector fields.  
\end{proof}

\paragraph{The $\Gamma$-Component of the Curvature.}  As a second step towards the computation of the magnetic symplectic form \eqref{magform}, we need a convenient expression for the exterior derivative $\mathbf{d} \mathcal{A}_\Gamma$.  While it is in theory possible to compute the derivative directly from the expression \eqref{Aalpha} for $\mathcal{A}_\Gamma$, we can avoid much of the technicalities by means of the following insight: for the rigid body with circulation, $\mathbf{d} \mathcal{A}_\Gamma$ is nothing but the $\Gamma$-component of the curvature of the Neumann connection, as defined below.  Recall that an explicit expression for the latter was established in proposition~\ref{prop:curv}. We define the $\Gamma$-component of the curvature $\mathcal{B}$ as before by the following prescription:
\[
	\mathcal{B}_\Gamma((\zeta, \mathbf{u}),  (\xi, \mathbf{v})) :=
		\left< (0, \Gamma), \mathcal{B}((\zeta, \mathbf{u}),  (\xi, \mathbf{v})) \right>,
\]
again relying on the fact that $(0, \Gamma) \in \mathfrak{X}_{\mathrm{vol}}(\mathcal{F}_0)^\ast$, while $\mathcal{B}((\zeta, \mathbf{u}),  (\xi, \mathbf{v}))$ is an element of $\mathfrak{X}_{\mathrm{vol}}(\mathcal{F}_0)$.

\begin{proposition} \label{prop:gammacurv}
	The $\Gamma$-component of the curvature is 
	a left $SE(2)$-invariant two-form on $SE(2)$ given at the identity by
\begin{equation}\label{curvB}
	\mathcal{B}_{\Gamma}(e) = \Gamma \mathbf{e}_x^\ast \wedge \mathbf{e}_y^\ast.
\end{equation}
\end{proposition}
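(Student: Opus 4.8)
The plan is to exploit the $SE(2)$-invariance to push the whole computation down to the Lie algebra, and then to specialize the curvature formula of Proposition~\ref{prop:curv} to the momentum value $(0,\Gamma)$, where I expect the bulk term to drop out and leave only a boundary integral that reproduces the circulation. First I would record that $\mathcal{B}_\Gamma$ really is left $SE(2)$-invariant: the curvature $\mathcal{B}$ of the Neumann connection is horizontal and $\mathrm{Diff}_{\mathrm{vol}}(\mathcal{F}_0)$-equivariant, and pairing it with the element $(0,\Gamma)$, which is fixed by the coadjoint action (as verified in Section~\ref{sec:diffred}), yields a basic two-form that descends to $SE(2) = Q/\mathrm{Diff}_{\mathrm{vol}}(\mathcal{F}_0)$. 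Since $\mathcal{A}$ is $SE(2)$-invariant and the two group actions commute, the descended form is left invariant, hence determined by its value at the identity on two Lie algebra elements $\zeta_1 = (\Omega_1, V_{x,1}, V_{y,1})$ and $\zeta_2 = (\Omega_2, V_{x,2}, V_{y,2})$.

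Next I would evaluate the stream-function form \eqref{curvaturestream} of the curvature at $\mu = (0,\Gamma)$. Under the identification \eqref{iso}, the momentum $(0,\Gamma)$ is represented by the harmonic one-form $\alpha_\Gamma$ (zero vorticity, circulation $\Gamma$), which is exactly the one-form appearing in the boundary term. The bulk term $\langle\!\langle (0,\Gamma), \mathbf{d}\Psi_1 \wedge \mathbf{d}\Psi_2\rangle\!\rangle$ is the $L_2$-pairing of the \emph{exact two-form} (vorticity) component of $(0,\Gamma)$ with $\mathbf{d}\Psi_1 \wedge \mathbf{d}\Psi_2$; as that component is zero, the bulk term vanishes and, using $\mathbf{d}\Psi_1 \wedge \mathbf{d}\Psi_2 = \mathbf{d}\Phi_1 \wedge \mathbf{d}\Phi_2$ to return to the velocity potentials,
\[
	\mathcal{B}_\Gamma(\zeta_1, \zeta_2) = -\oint_{\partial\mathcal{B}} \alpha_\Gamma \wedge \ast(\mathbf{d}\Phi_1 \wedge \mathbf{d}\Phi_2).
\]
Since $\ast(\mathbf{d}\Phi_1 \wedge \mathbf{d}\Phi_2)$ is the Jacobian of $(\Phi_1, \Phi_2)$, and $\alpha_\Gamma$ restricts on $\partial\mathcal{B}$ to the tangential one-form of the circulatory field $\mathbf{u}_\Gamma$, the right-hand side is an ordinary line integral along the body contour.

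Finally I would evaluate this line integral on the three basis pairs. Inserting the Kirchhoff decomposition $\Phi_i = \Omega_i \Phi_\Omega + V_{x,i}\Phi_x + V_{y,i}\Phi_y$ and expanding bilinearly, the integral splits into contributions from the elementary pairs $(\Phi_\Omega, \Phi_x)$, $(\Phi_\Omega, \Phi_y)$ and $(\Phi_x, \Phi_y)$, each weighted by the corresponding minor of $(\zeta_1, \zeta_2)$. Passing to the conformal circle plane, where $\mathbf{u}_\Gamma$ is the tangential point-vortex field of strength $\Gamma/2\pi$, and writing the Jacobian as $\mathrm{Im}(F_1' \overline{F_2'})$ in terms of the holomorphic complex potentials $F_i = \Phi_i + i\Psi_i$, each contour integral becomes a residue computation in which the boundary relation $\Psi_i|_{\partial\mathcal{B}} = V_{x,i} Y - V_{y,i} X - \tfrac{\Omega_i}{2}(X^2+Y^2)$ from \eqref{boundary} is used to eliminate the antiholomorphic factor. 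I expect the two cross terms built from $\Phi_\Omega$ to integrate to zero, reflecting the fact that circulation produces a pure force and no torque, and the $(\Phi_x, \Phi_y)$ term to give exactly the residue $\Gamma$; collecting the surviving term yields $\mathcal{B}_\Gamma(\zeta_1, \zeta_2) = \Gamma(V_{x,1} V_{y,2} - V_{y,1} V_{x,2})$, which is \eqref{curvB}.

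The main obstacle will be precisely this last boundary computation: verifying that the rotational cross terms vanish and that the translational term is normalized to exactly $\Gamma$. This is in essence the Blasius/Kutta-Zhukowski residue argument, and the delicate points are the correct tangential restriction of $\alpha_\Gamma$, the conformal factor relating the physical body contour to the unit circle (under which, reassuringly, the two-form $\mathbf{d}\Phi_1 \wedge \mathbf{d}\Phi_2$ and the circulatory form $\alpha_\Gamma$ are both natural), and the careful use of \eqref{boundary} to rewrite the mixed holomorphic-antiholomorphic integrand so that the circulation emerges as a single residue.
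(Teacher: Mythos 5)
Your first two steps coincide with the paper's own proof: left-invariance of $\mathcal{B}_\Gamma$ follows from the $SE(2)$-invariance of the Neumann connection together with the fact that $(0,\Gamma)$ is fixed by the coadjoint action, and the bulk term in \eqref{curvaturestream} vanishes because the vorticity component of $(0,\Gamma)$ is zero, i.e.\ $\mathbf{d}\alpha_\Gamma = 0$ throughout $\mathcal{F}$. The genuine gap is in the third step. All of the actual content of \eqref{curvB} lives in the boundary integral, and you never evaluate it: you set up a Blasius-type residue scheme in the conformal plane and then state that you \emph{expect} the two rotational cross terms to integrate to zero and the translational term to be normalized to exactly $\Gamma$. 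Those two expectations \emph{are} the proposition, so as written the proposal assumes what it has to prove, and you acknowledge as much by calling this "the main obstacle."

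Two concrete points about how that step should be closed. First, converting back from stream functions to velocity potentials is a strategic step backwards: the boundary values of the $\Phi_i$ are not explicitly known (they require solving the exterior Neumann problem), whereas the whole advantage of the stream-function form of the curvature is that \eqref{boundary} gives $\Psi_i$ on $\partial\mathcal{B}$ explicitly as quadratic polynomials in $X,Y$. The paper exploits exactly this: substituting those boundary expressions, the integrand becomes an explicit polynomial multiple of $\alpha_\Gamma$, and the whole integral collapses to the three moments $\oint_{\partial\mathcal{B}}\alpha_\Gamma = \Gamma$ and $\oint_{\partial\mathcal{B}} X\,\alpha_\Gamma = \oint_{\partial\mathcal{B}} Y\,\alpha_\Gamma = 0$ -- no conformal map, no residues. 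Second, your physical justification for the vanishing of the cross terms ("circulation produces a pure force and no torque") is not the operative reason and would mislead you: those terms vanish if and only if the first moments $\oint_{\partial\mathcal{B}} X\,\alpha_\Gamma$ and $\oint_{\partial\mathcal{B}} Y\,\alpha_\Gamma$ vanish, which holds precisely because the body origin is placed at the conformal center (the constant term of the conformal map is zero). For any other choice of origin the cross terms survive and $\mathcal{B}_\Gamma$ acquires $\mathbf{e}_\Omega^\ast \wedge \mathbf{e}_x^\ast$ and $\mathbf{e}_\Omega^\ast \wedge \mathbf{e}_y^\ast$ components, so a completed proof must invoke this normalization explicitly rather than a general no-torque principle. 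With these repairs your residue plan could likely be pushed through, but it reconstructs with heavy machinery what the explicit boundary data \eqref{boundary} and \eqref{contint} give in a few lines.
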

\begin{proof}
The left-invariance of $\mathcal{B}_{\Gamma}$ follows from the fact that the Neumann connection itself is left invariant under the action of $SE(2)$ upon itself.  We may hence restrict our attention to the value of the curvature at a point $(e, \varphi) \in Q$, where $e$ is the identity in $SE(2)$.   The fact that $\mathcal{B}_{\Gamma}$ drops to $SE(2)$ will be obvious once we determine its exact form, but can also be proved directly; see for instance \cite{VaKaMa2009}.

The first integral in the expression (\ref{curvaturestream}) for the curvature vanishes since the integration is over the fluid domain, where $\mathbf{d}\alpha_\Gamma = 0$.  The second integral can be computed explicitly, since we only need the expression (\ref{boundary}) for the stream function on the boundary.

Let $(g, \varphi, \zeta_i, \mathbf{u}_i)$, $i = 1, 2$, be elements of $T_{(g, \varphi)} Q$ and write the velocity as $\zeta_i = (\Omega_i, \mathbf{V}_i)$, where $\mathbf{V}_i = U_i(\cos \alpha_i, \sin \alpha_i)$.  Consider the stream functions $\Psi_i$ given in \eqref{boundary} corresponding to the rigid body motions $(\Omega_i, \mathbf{V}_i)$.  On the boundary $\partial \mathcal{B}$, we then have that
\begin{align*}
	\ast \left( \mathbf{d}\Psi_1 \wedge \mathbf{d}\Psi_2 \right) =  
		U_1 U_2 \sin(\alpha_1 - \alpha_2) & +  \Omega_1 U_2 (x \cos\alpha_2 + y \sin\alpha_2) \\
			& - \Omega_2 U_1 (x\cos\alpha_1 + y \sin\alpha_1),
\end{align*}
so that the curvature is given by
\[
	\mathcal{B}_{\Gamma}((\zeta_1, \mathbf{u}_1), (\zeta_2, \mathbf{u}_2)) = 
	- \int _{\partial \mathcal{B}} \alpha_\Gamma \wedge \ast(\mathbf{d}\Psi_1 \wedge \mathbf{d}\Psi_2) = -\Gamma U_1 U_2 \sin(\alpha_1 - \alpha_2).
\]
Here, we have used the fact that 
\[
	\oint_{\partial\mathcal{B}} \alpha_\Gamma = \Gamma \quad \text{and} \quad \oint_{\partial\mathcal{B}} x \alpha_\Gamma = \oint_{\partial\mathcal{B}} y \alpha_\Gamma = 0.
\]
Note that the value of $\mathcal{B}_{\Gamma}$ does not depend on $\mathbf{u}_1, \mathbf{u}_2$ so that $\mathcal{B}_{\Gamma}$ drops to $SE(2)$ as claimed.  Finally, using the basis (\ref{dualbasis}) of $\mathfrak{se}(2)^\ast$, we may write the curvature at the identity as 
\[
	\mathcal{B}_{\Gamma}(e) = \Gamma \mathbf{e}_x^\ast \wedge \mathbf{e}_y^\ast.
\]
This concludes the proof.
\end{proof}

We now prove the main result of this section, that the exterior derivative $\mathbf{d}\mathcal{A}_\Gamma$ is equal to the $\Gamma$-component $\mathcal{B}_\Gamma$ of the curvature.  More precisely, recalling the projection $\mathrm{pr} : Q \rightarrow SE(2)$, we have 
\begin{equation} \label{curveq}
	\mathrm{pr}^\ast \mathcal{B}_\Gamma = \mathbf{d} \mathcal{A}_\Gamma.
\end{equation}

Note that this is not true for arbitrary reduced cotangent bundles, and that this 
is highly specific to the case of rigid bodies with circulation.  The expression \eqref{curveq} can be proved as follows:  because of the Cartan structure formula $\mathcal{B} = \mathbf{d}\mathcal{A} + [\mathcal{A}, \mathcal{A}]$ for right actions, we have that 
\begin{equation} \label{csf}
\mathbf{d} \mathcal{A}_\Gamma =  \mathrm{pr}^\ast \mathcal{B}_\Gamma -\left\langle (0, \Gamma),  [\mathcal{A}, \mathcal{A}]\right\rangle.
\end{equation}
It remains to show that the second term on the right-hand side vanishes.   It can be shown (see the proof of 
theorem~4.2 in \cite{mw1983}) that, for divergence-free
vector fields $\mathbf{u}_{\mathrm{v}, 1}$ and $\mathbf{u}_{\mathrm{v}, 2}$ tangent to $\partial \mathcal{F}$
and arbitrary one-forms $\alpha$, the following holds: 
\begin{equation} \label{magic}
  \int_\mathcal{F} \alpha([\mathbf{u}_{\mathrm{v}, 1}, \mathbf{u}_{\mathrm{v}, 2}]) \,d V = 
  \int_\mathcal{F} \mathbf{d} \alpha(\mathbf{u}_{\mathrm{v}, 1}, \mathbf{u}_{\mathrm{v}, 2}) \,d V.
\end{equation}
If we now put $\mathbf{u}_{\mathrm{v}, i} := \mathcal{A}(\zeta_i, \mathbf{u}_i)$, $i = 1,2$, we may rewrite the second term in (\ref{csf}) as follows:
%
%
\begin{align*}
\left\langle (0, \Gamma), [\mathcal{A}(\zeta_1, \mathbf{u}_1), \mathcal{A}(\zeta_2, \mathbf{u}_2)] \right\rangle  & = \left\langle (0, \Gamma), [\mathbf{u}_{\mathrm{v},1}, \mathbf{u}_{\mathrm{v},2}] \right\rangle \\
	&  = \int_{\mathcal{F}} \alpha_\Gamma([\mathbf{u}_{\mathrm{v},1}, \mathbf{u}_{\mathrm{v},2}]) \,dV 
	  = \int_{\mathcal{F}} \mathbf{d}\alpha_\Gamma(\mathbf{u}_{\mathrm{v},1}, \mathbf{u}_{\mathrm{v},2}) \,dV,\end{align*}

where $\alpha_\Gamma$ is the one-form given by (\ref{vortalpha}).  However, the integral on the right-hand side is again zero because the integration is over $\mathcal{F}$, while the support of $\mathbf{d}\alpha_\Gamma$ is concentrated inside the rigid body.
This shows that $\mathbf{d}\mathcal{A}_\Gamma$ in (\ref{csf}) is nothing but the $\Gamma$-component of the curvature of the Neumann connection, as calculated in proposition~\ref{prop:gammacurv}.  


\paragraph{The Reduced Phase Space.}  
We formally establish the isomorphism $\varphi_\Gamma$ of theorem~\ref{thm:cotbundle} between the reduced configuration space $J^{-1}(0, \Gamma)/\mathrm{Diff}_{\mathrm{vol}}(\mathcal{F}_0)$ and $T^\ast SE(2)$.   Recall that $\varphi_\Gamma$ is defined as the composition $\varphi_0 \circ {\mathrm{shift}}_\Gamma$, where $\varphi_0$ and ${\mathrm{shift}}_\Gamma$ were defined in the proof of theorem~\ref{thm:cotbundle}.  The following two lemmas are devoted to the explicit form of these two constitutive maps for the rigid body with circulation.

\begin{lemma}
	Consider the map $\mathfrak{m} : \mathfrak{se}(2)^\ast \rightarrow \mathfrak{se}(2)^\ast$ 
	defined in \eqref{mapm}. The map $\varphi_0: (T^\ast Q)_0 \rightarrow T^\ast SE(2)$ defined 
	in the proof of theorem~\ref{thm:cotbundle} is given in terms of $\mathfrak{m}$ by the following expression:
	\begin{equation} \label{multmap}
		\varphi_0( [g, \varphi, \pi_b, \mathbf{d}\Phi_\zeta]) = (g, \mathfrak{m}(\pi_b)). 
	\end{equation}
\end{lemma}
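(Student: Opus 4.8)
The plan is to unwind the definition of $\varphi_0$ from the proof of theorem~\ref{thm:cotbundle} and evaluate it on a representative $\alpha_q = (g, \varphi, \pi_b, \mathbf{d}\Phi_\zeta) \in J^{-1}(0)$. Recall that $\varphi_0$ descends from the map $\bar{\varphi}_0$ defined in \eqref{barphi} through the relation $\langle \bar{\varphi}_0(\alpha_q), T\mathrm{pr}(v_q)\rangle = \langle \alpha_q, v_q\rangle$, where $\mathrm{pr} : Q \to SE(2)$ is the bundle projection. I will identify $T^\ast SE(2)$ with $SE(2) \times \mathfrak{se}(2)^\ast$ by left trivialization, so that $\bar{\varphi}_0(\alpha_q) = (g, \mu)$ for some $\mu \in \mathfrak{se}(2)^\ast$ to be determined, while $T\mathrm{pr}(g,\varphi,\zeta',\mathbf{u}') = (g, \zeta')$ simply reads off the body velocity. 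Because $\mathrm{pr}$ is a submersion, $T\mathrm{pr}$ is surjective and $\zeta'$ ranges over all of $\mathfrak{se}(2)$, so $\mu$ is uniquely pinned down by the defining relation.

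First I would write the cotangent pairing on $Q$ explicitly. For a tangent vector $v_q = (g,\varphi,\zeta',\mathbf{u}')$ one has
\[
    \langle \alpha_q, v_q\rangle = \langle \pi_b, \zeta'\rangle + \int_{\mathcal{F}} \mathbf{d}\Phi_\zeta(\mathbf{u}')\,dV,
\]
so the task reduces to evaluating the fluid integral. Decomposing $\mathbf{u}'$ by the Helmholtz--Hodge decomposition \eqref{eq:u} as $\mathbf{u}' = \nabla\Phi_{\zeta'} + \mathbf{u}'_{\mathrm{v}}$, with $\Phi_{\zeta'}$ the Neumann potential of $\zeta'$ and $\mathbf{u}'_{\mathrm{v}}$ divergence-free and tangent to $\partial\mathcal{B}$, the $L_2$-orthogonality of $\mathbf{d}\Phi_\zeta = (\nabla\Phi_\zeta)^\flat$ to divergence-free fields tangent to the boundary (the same orthogonality used in computing $T_{\mathrm{fluid}}$) kills the vortical part, giving
\[
    \int_{\mathcal{F}} \mathbf{d}\Phi_\zeta(\mathbf{u}')\,dV = \int_{\mathcal{F}} \nabla\Phi_\zeta \cdot \nabla\Phi_{\zeta'}\,dV = \zeta^T \mathbb{M}_f \zeta',
\]
the last equality being the polarization of the quadratic form that defines the added-mass matrix in \eqref{eq:Tf_zeta}, valid because $\Phi_\zeta$ depends linearly on $\zeta$ through the Kirchhoff decomposition \eqref{eq:velpot}.

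It then remains to recast this in terms of $\mathfrak{m}$. Using $\zeta = \mathbb{M}_b^{-1}\pi_b$ and the symmetry of $\mathbb{M}_b$ and $\mathbb{M}_f$, I would rewrite $\zeta^T \mathbb{M}_f \zeta' = \langle \mathbb{M}_f \mathbb{M}_b^{-1}\pi_b, \zeta'\rangle$, so that
\[
    \langle \mu, \zeta'\rangle = \langle \pi_b + \mathbb{M}_f \mathbb{M}_b^{-1}\pi_b, \zeta'\rangle = \langle \mathfrak{m}(\pi_b), \zeta'\rangle
\]
for every $\zeta' \in \mathfrak{se}(2)$, by the definition \eqref{mapm} of $\mathfrak{m}$. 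Surjectivity of $T\mathrm{pr}$ forces $\mu = \mathfrak{m}(\pi_b)$, hence $\bar{\varphi}_0(g,\varphi,\pi_b,\mathbf{d}\Phi_\zeta) = (g, \mathfrak{m}(\pi_b))$; passing to the $\mathrm{Diff}_{\mathrm{vol}}(\mathcal{F}_0)$-quotient (under which $\bar{\varphi}_0$ is invariant) yields the claimed formula \eqref{multmap}, since the right-hand side depends only on $g$ and $\pi_b$ and not on $\varphi$. The main obstacle is the evaluation of the fluid pairing: one must argue cleanly that $\mathbf{u}'_{\mathrm{v}}$ contributes nothing—this is where the impermeability conditions enter, ensuring $\mathbf{u}'_{\mathrm{v}}$ is tangent to $\partial\mathcal{B}$ so Hodge orthogonality applies—and then recognize the surviving potential-flow integral as the \emph{polarized} added-mass form rather than the full kinetic-energy quadratic. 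The rest is bookkeeping in the left-trivialized pairing together with a symmetric-matrix manipulation.
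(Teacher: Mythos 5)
Your proof is correct and takes essentially the same approach as the paper: both unwind the defining relation \eqref{barphi} in a left trivialization and identify the fluid contribution to the pairing with the polarized added-mass form $\zeta^T \mathbb{M}_f \xi$, which is exactly what produces $\mathfrak{m}(\pi_b)$. The only cosmetic difference is that the paper evaluates \eqref{barphi} directly on the potential-flow lift $(\xi, \nabla \Phi_\xi)$ of $\xi$ --- permissible because the pairing on $J^{-1}(0)$ is independent of the choice of lift --- whereas you take an arbitrary lift $(\zeta', \mathbf{u}')$ and then discard its divergence-free part via Hodge orthogonality, so your extra step merely re-proves that lift-independence.
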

\begin{proof}
Consider an element $(g, \varphi, \pi_b, \mathbf{d}\Phi_\zeta)$ of $J^{-1}(0)$.  
In a left trivialization, 
the map $\bar{\varphi}_0: J^{-1}(0) \rightarrow T^\ast SE(2)$ defined in \eqref{barphi} is given  by 
\begin{align*}
	\left< \bar{\varphi}(\pi_b, \mathbf{d}\Phi_\zeta), \xi \right> 
	 = \left< (\pi_b, \mathbf{d}\Phi_\zeta), (\xi, \nabla \Phi_\xi) \right> 
	= \left< \pi_b,\xi \right> + \left<\mathbf{d}\Phi_\zeta, \nabla \Phi_\xi \right>,
\end{align*}
where now $\Phi_\xi$ is the solution of \eqref{Neumann} with boundary data $\xi$.  
The right-hand side can then be written as follows, using the definition of the added-mass matrices:
\begin{align*}
\left< (\pi_b, \mathbf{d}\Phi_\zeta), (\xi, \nabla \Phi_\xi) \right>
& = \left<\pi_b, \xi \right> + \int_{\mathcal{F}} \mathbf{d}\Phi_\zeta \cdot \nabla \Phi_\xi \, dV 
 = \zeta^T (\mathbb{M}_b +  \mathbb{M}_f )\xi \\
&  = \left< (\mathbb{I} + \mathbb{M}_f \mathbb{M}_b^{-1}) \pi_b, \xi \right>
\end{align*}
since $\pi_b = \mathbb{M}_b \zeta$.  In other words, we have that 
\[
	\left< \bar{\varphi}(g, \varphi, \pi_b, \mathbf{d}\Phi_\zeta), \xi\right>
	= \left< \mathfrak{m}(\pi_b), \xi \right>, 
\]
so that $\bar{\varphi}(g, \varphi, \pi_b, \mathbf{d}\Phi_\zeta) = (g,\mathfrak{m}(\pi_b))$.
\end{proof}


We now determine the map ${\mathrm{shift}}_\Gamma : (T^\ast Q)_{(0, \Gamma)} \rightarrow (T^\ast Q)_0$.  Because of proposition~\ref{prop:compalpha}, the map ${\mathrm{Shift}}_\Gamma : J^{-1}(0, \Gamma) \rightarrow J^{-1}(0)$ is in our case a simple shift by $\alpha_\Gamma$: 
\[
	{\mathrm{Shift}}_\Gamma(g, \varphi, \pi_b, \alpha) = (g, \varphi, \pi_b, \alpha-\alpha_\Gamma).
\]
Physically speaking, we may think of $\alpha \in J^{-1}(0, \Gamma)$ as a one-form with circulation $\Gamma$; by subtracting $\alpha_\Gamma$ from $\alpha$, we obtain a one-form with zero circulation.  For the quotient map 
${\mathrm{shift}}_\Gamma: (T^\ast Q)_{(0, \Gamma)} \rightarrow (T^\ast Q)_0$ we then have the following result.

\begin{lemma}
The quotient map 
${\mathrm{shift}}_\Gamma: (T^\ast Q)_{(0, \Gamma)} \rightarrow (T^\ast Q)_0$ is given by 
\[
	{\mathrm{shift}}_\Gamma([(g, \varphi, \pi_b, \alpha)]) = [(g, \varphi, \pi_b, \alpha-\alpha_\Gamma)].
\]
\end{lemma}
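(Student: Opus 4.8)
The plan is to read the quotient map straight off its defining property, leaning on the fact that the capital map $\mathrm{Shift}_\Gamma$ has just been computed. Recall from the proof of theorem~\ref{thm:cotbundle} that $\mathrm{shift}_\Gamma$ is \emph{by construction} the map induced on the quotients by $\mathrm{Shift}_\Gamma : J^{-1}(0,\Gamma) \to J^{-1}(0)$; since the isotropy of $(0,\Gamma)$ is the full group, both quotients are taken with respect to $\mathrm{Diff}_{\mathrm{vol}}(\mathcal{F}_0)$, and the induced map obeys $\mathrm{shift}_\Gamma([x]) = [\mathrm{Shift}_\Gamma(x)]$ for any representative $x = (g, \varphi, \pi_b, \alpha)$. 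As we have just shown that $\mathrm{Shift}_\Gamma(g, \varphi, \pi_b, \alpha) = (g, \varphi, \pi_b, \alpha - \alpha_\Gamma)$, the claimed formula will follow the moment we verify that this prescription is independent of the chosen representative, i.e.\ that $\mathrm{Shift}_\Gamma$ is equivariant for the $\mathrm{Diff}_{\mathrm{vol}}(\mathcal{F}_0)$-actions on source and target.

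The main — and essentially the only — point is therefore this equivariance, and I would establish it by writing out the relabeling action in the Eulerian representation $(g, \varphi, \pi_b, \alpha)$. Because the Eulerian velocity $\mathbf{u} = \dot\varphi \circ \varphi^{-1}$ and the body data $(g, \zeta)$ are manifestly unchanged under the substitution $\varphi \mapsto \varphi \circ \phi$, the cotangent-lifted action preserves the pairing with $(\zeta, \mathbf{u})$ while fixing $(g, \pi_b, \alpha)$; concretely it reads $\phi \cdot (g, \varphi, \pi_b, \alpha) = (g, \varphi \circ \phi, \pi_b, \alpha)$, altering only the label $\varphi$. This is consistent with proposition~\ref{prop:vortmom}, where $J$ transforms by $\phi^\ast$ precisely because $\varphi^\ast \alpha$ becomes $\phi^\ast \varphi^\ast \alpha$. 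I would then observe that $\alpha_\Gamma$, being the one-form dual to the circulatory field $\mathbf{u}_\Gamma$ attached to the conformal center of the body, depends only on the configuration $g \in SE(2)$ and on $\Gamma$, and not on the fiber variable $\varphi$; hence it is constant along each $\mathrm{Diff}_{\mathrm{vol}}(\mathcal{F}_0)$-orbit.

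Combining these two facts gives the equivariance directly:
\[
\mathrm{Shift}_\Gamma\bigl(\phi \cdot (g, \varphi, \pi_b, \alpha)\bigr) = (g, \varphi \circ \phi, \pi_b, \alpha - \alpha_\Gamma) = \phi \cdot \mathrm{Shift}_\Gamma(g, \varphi, \pi_b, \alpha),
\]
so $\mathrm{Shift}_\Gamma$ descends to a well-defined map on the quotients, and evaluating $[\mathrm{Shift}_\Gamma(\cdot)]$ on a representative yields $\mathrm{shift}_\Gamma([(g, \varphi, \pi_b, \alpha)]) = [(g, \varphi, \pi_b, \alpha - \alpha_\Gamma)]$, as desired. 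The hard part here is conceptual rather than computational: one must be careful that $\alpha_\Gamma$ genuinely lives on the physical domain $\mathcal{F}$ as a fixed Eulerian object, so that no hidden $\phi$-dependence creeps in through the fiber; granting this, the lemma is immediate from the general reduction machinery of theorem~\ref{thm:cotbundle} together with the explicit form of $\mathrm{Shift}_\Gamma$.
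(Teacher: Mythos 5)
Your proof is correct and follows essentially the same route as the paper: the paper treats this lemma as an immediate consequence of the explicit formula for $\mathrm{Shift}_\Gamma$ (obtained via proposition~\ref{prop:compalpha}) together with the fact, noted in the outline of the proof of theorem~\ref{thm:cotbundle}, that the shift map is $\mathrm{Diff}_{\mathrm{vol}}(\mathcal{F}_0)$-equivariant and hence drops to the quotient. Your explicit verification of that equivariance --- the cotangent-lifted relabeling action reads $\phi \cdot (g, \varphi, \pi_b, \alpha) = (g, \varphi \circ \phi, \pi_b, \alpha)$ in the Eulerian representation, while $\alpha_\Gamma$ is independent of the fiber variable $\varphi$ --- simply spells out the step the paper leaves as ``easily verified.''
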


By concatenating these two maps, it is now straightforward to find the explicit form of the isomorphism $\varphi_\Gamma$  between $(T^\ast Q)_{(0, \Gamma)}$ and $T^\ast SE(2)$:  from \eqref{rediso} we have that the isomorphism is given by 
\begin{equation} \label{dirmap}
	\varphi_\Gamma: [(g, \varphi, \pi_b, \alpha)] \mapsto (g, \mathfrak{m}(\pi_b)).
\end{equation}
The general theory of cotangent bundle reduction guarantees that this map is an isomorphism, but 
it is instructive to construct the inverse mapping explicitly.  For every $(g, \pi) \in T^\ast SE(2)$, put 
$\pi_b := \mathfrak{m}^{-1}(\pi)$ and set $\alpha := \alpha_\Gamma + \mathbf{d}\Phi_\zeta$, where $\Phi_\zeta$ is the solution of the Neumann problem \eqref{Neumann} with boundary data $\zeta = \mathbb{M}_b^{-1} \pi_b$.  Furthermore, choose an arbitary fluid embedding $\varphi$ such that $(g, \varphi) \in Q$.  The inverse mapping $\varphi_\Gamma^{-1}$ is then given by 
\begin{equation} \label{invmap}
	\varphi_\Gamma^{-1}(g, \pi) = [(g, \varphi, \pi_b, \alpha)], \quad
	\text{where $\pi = \mathfrak{m}^{-1}(\pi_b)$ and $\alpha = \alpha_\Gamma + \mathbf{d}\Phi_\zeta$.}
\end{equation}
It is straightforward to check that $\varphi_\Gamma^{-1} \circ \varphi_\Gamma = \varphi_\Gamma \circ \varphi_\Gamma^{-1} = \mathrm{id}$.

\paragraph{The Reduced Hamiltonian.}

As a last step towards establishing a reduced Hamiltonian formulation for the rigid body with circulation, we need to find an appropriate expression for the Hamiltonian on this space.  This can be done by computing first the Hamiltonian on the unreduced space from the kinetic energy (\ref{Tkin}) and then noting that the result induces a well-defined function on the reduced phase space.  

\begin{proposition}
	The Hamiltonian on the reduced phase space $T^\ast SE(2)$ is given by 
	\begin{equation} \label{redhamil}
		H_{\mathrm{red}}(g, \pi) =  \frac{1}{2} \pi^T\, \mathbb{M}^{-1} \,\pi,
	\end{equation}
	where $\mathbb{M} = \mathbb{M}_b + \mathbb{M}_f$.
\end{proposition}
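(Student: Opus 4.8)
The plan is to descend the unreduced Hamiltonian \eqref{almostredham} to the quotient $J^{-1}(0, \Gamma)/\mathrm{Diff}_{\mathrm{vol}}(\mathcal{F}_0)$ and then transport it to $T^\ast SE(2)$ along the isomorphism $\varphi_\Gamma$ of \eqref{dirmap}. First I would specialize \eqref{almostredham} to the case at hand. For the rigid body with circulation there is no ambient vorticity, so the vortical stream function vanishes, $\delta \Psi = 0$, and the first term of \eqref{almostredham} drops out. Restricting to $J^{-1}(0, \Gamma)$, where $\alpha = \alpha_\Gamma + \mathbf{d}\Phi_\zeta$, this leaves
\[
	H = \frac{1}{2} \pi_b^T\, \mathbb{M}_b^{-1}(\mathbb{M}_b + \mathbb{M}_f) \mathbb{M}_b^{-1}\, \pi_b = \frac{1}{2}\zeta^T (\mathbb{M}_b + \mathbb{M}_f)\zeta,
\]
using $\zeta = \mathbb{M}_b^{-1}\pi_b$; the second form simply records that the surviving energy is the total kinetic energy \eqref{eq:T_solidfluid} of the solid together with its added mass.

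Next I would check that this function is well defined on the reduced phase space. In the form above, $H$ depends on the cotangent fibre only through the body momentum $\pi_b$ (equivalently through $\zeta$) and not on the embedding $\varphi$; moreover it is manifestly invariant under the particle-relabelling action, since that action leaves the Eulerian velocity field, and hence the kinetic energy \eqref{Tkin}, unchanged. Consequently $H$ is constant on the $\mathrm{Diff}_{\mathrm{vol}}(\mathcal{F}_0)$-orbits inside $J^{-1}(0, \Gamma)$ and descends to a smooth function on the quotient. The only analytic subtlety is the regularization already performed in passing to \eqref{almostredham}: the infinite constant $\|\alpha_\Gamma\|^2$ has been discarded, and because $\alpha_\Gamma$ is $L_2$-orthogonal to the exact part $\mathbf{d}\Phi_\zeta$, this removal is consistent with the decomposition $\alpha = \alpha_\Gamma + \mathbf{d}\Phi_\zeta$ valid on $J^{-1}(0, \Gamma)$.

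Finally I would push the descended Hamiltonian forward along $\varphi_\Gamma$. By \eqref{dirmap}, the class $[(g, \varphi, \pi_b, \alpha)]$ is sent to $(g, \pi)$ with $\pi = \mathfrak{m}(\pi_b)$. Using \eqref{mapm} in the form $\mathfrak{m}(\pi_b) = (\mathbb{M}_b + \mathbb{M}_f)\mathbb{M}_b^{-1}\pi_b = \mathbb{M}\zeta$, this reads simply $\pi = \mathbb{M}\zeta$, hence $\zeta = \mathbb{M}^{-1}\pi$. Substituting into $H = \tfrac{1}{2}\zeta^T \mathbb{M}\zeta$ gives $H_{\mathrm{red}}(g, \pi) = \tfrac{1}{2}\pi^T \mathbb{M}^{-1}\mathbb{M}\,\mathbb{M}^{-1}\pi = \tfrac{1}{2}\pi^T \mathbb{M}^{-1}\pi$, which is \eqref{redhamil}; the cancellation uses only the symmetry of $\mathbb{M}$. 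I do not expect any genuine obstacle here: the substance of the computation is that $\mathfrak{m}$ converts the body momentum $\pi_b = \mathbb{M}_b \zeta$ into the total momentum $\pi = \mathbb{M}\zeta$, so that the change of variables built into $\varphi_\Gamma$ turns the body-momentum expression \eqref{almostredham} into the total-momentum expression \eqref{redhamil}. The only thing requiring care is the bookkeeping among $\pi_b$, $\zeta$ and $\pi$, together with the consistent use of the regularized energy.
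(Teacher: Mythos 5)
Your proposal is correct and follows essentially the same route as the paper's proof: both drop the vortical term of \eqref{almostredham} since $\Psi = 0$ for irrotational flow, and both use the isomorphism $\varphi_\Gamma$ of \eqref{dirmap}, i.e.\ the relation $\pi = \mathfrak{m}(\pi_b) = (\mathbb{M}_b + \mathbb{M}_f)\mathbb{M}_b^{-1}\pi_b$, to convert the body-momentum expression into \eqref{redhamil}. The only cosmetic difference is direction of travel --- you descend the invariant Hamiltonian to the quotient and push it forward (with an explicit well-definedness check), whereas the paper evaluates $H$ on a representative chosen via $\varphi_\Gamma^{-1}$ --- but the substance of the computation is identical.
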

\begin{proof}
The proof relies on the explicit form of the isomorphism $\varphi_\Gamma$ in \eqref{dirmap}.  Let $(g, \pi)$ be an element of $T^\ast SE(2)$ and consider an arbitrary element $(g, \varphi, \pi_b, \alpha) \in J^{-1}(0, \Gamma)$ such that 
$\varphi_\Gamma^{-1}(g, \pi) = [(g, \varphi, \pi_b, \alpha)]$.  Recall from the discussion following \eqref{invmap} that $\pi_b$ and $\alpha$ are given by 
\begin{equation} \label{momrel}
	\pi_b = \mathfrak{m}^{-1}(\pi)
	\quad \text{and} \quad 
	\alpha = \alpha_\Gamma + \mathbf{d}\Phi_\zeta,
\end{equation}
where $\Phi_\zeta$ is the solution of the Neumann problem \eqref{Neumann} with boundary data $\zeta = \mathbb{M}_b^{-1} \pi_b$.  The relation between the reduced Hamiltonian $H_{\mathrm{red}}$ on $T^\ast SE(2)$ and the Hamiltonian \eqref{almostredham} is then written as
\begin{align*}
	H_{\mathrm{red}}(g, \pi) & = H(g, \varphi, \pi_b, \alpha) 
		 =  \frac{1}{2} \pi_b^T \, \mathbb{M}_b^{-1}(\mathbb{M}_b + \mathbb{M}_f)  \mathbb{M}_b^{-1} \, \pi_b,
\end{align*}
where we have used the fact that the fluid is irrotational, so that $\Psi = 0$ in \eqref{almostredham}.
Keeping in mind that $\pi = \mathfrak{m}(\pi_b)$, or alternatively that $\pi = (\mathbb{M}_b + \mathbb{M}_f) \mathbb{M}_b^{-1} \pi_b$, we finally obtain the following expression for the reduced Hamiltonian:
\[
	H_{\mathrm{red}}(g, \pi) = \frac{1}{2} \pi^T \, (\mathbb{M}_b + \mathbb{M}_f)^{-1} \, \pi,
\]
which is precisely \eqref{redhamil}.
\end{proof}

\paragraph{Summary.}  In this section, we used the framework of cotangent bundle reduction to obtain expressions for the reduced phase space, its symplectic structure, and Hamiltonian after reducing with respect to the particle relabeling symmetry.  We summarize these results in the theorem below.

\begin{theorem} \label{thm:reddiff}
	Let $\Gamma \in \mathbb{R}$  and consider the associated element $(0, \Gamma) \in \mathfrak{X}^\ast_{\mathrm{vol}}(\mathcal{F}_0)$.  Then the following properties hold:
	\begin{itemize}
		\item The isotropy group $\mathrm{Diff}_{\mathrm{vol}}(\mathcal{F}_0)_{(0, \Gamma)}$ of $(0, \Gamma)$ is the whole of the group $\mathrm{Diff}_{\mathrm{vol}}(\mathcal{F}_0)$.
		
		\item The reduced phase space $J^{-1}(0, \Gamma)/\mathrm{Diff}_{\mathrm{vol}}(\mathcal{F}_0)$ is symplectically isomorphic to $T^\ast SE(2)$, equipped with the left-invariant symplectic form given at the identity by 
		\begin{equation} \label{sympstruct}
			\Omega_\Gamma(e) := \omega_{\mathrm{can}}(e) - \Gamma \mathbf{e}^\ast_x \wedge \mathbf{e}^\ast_y, 
		\end{equation}
		where $\omega_{\mathrm{can}}$ is the canonical symplectic form on $T^\ast SE(2)$.
		
		\item The reduced kinetic energy Hamiltonian on $T^\ast SE(2)$ is given by 
		\begin{equation} \label{redhamilt}
			H_{\mathrm{red}}(g, \pi) = \frac{1}{2} \pi^T \, \mathbb{M}^{-1} \, \pi,
		\end{equation}
		where $\mathbb{M} = \mathbb{M}_b + \mathbb{M}_f$ is the full mass matrix. 
	\end{itemize}
\end{theorem}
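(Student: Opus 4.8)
The statement collects the three outputs of this section, so the plan is to assemble them from the machinery already in place rather than to prove anything new; the one point requiring care is matching the abstract objects of Theorem~\ref{thm:cotbundle} with their concrete fluid-dynamical counterparts. For the first bullet I would argue directly from the coadjoint action on $\mathfrak{X}^\ast_{\mathrm{vol}}(\mathcal{F}_0)$: since \eqref{coad} gives $\mathrm{CoAd}_\phi(0,\Gamma) = (0,\Gamma)$ for every $\phi \in \mathrm{Diff}_{\mathrm{vol}}(\mathcal{F}_0)$, the stabilizer of $(0,\Gamma)$ is the full group. This is simply the statement that circulation is preserved under arbitrary relabelings (Kelvin's theorem), and it verifies the hypothesis $G_\mu = G$ needed to invoke the reduction theorem.

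For the second bullet I would specialize Theorem~\ref{thm:cotbundle} to $G = \mathrm{Diff}_{\mathrm{vol}}(\mathcal{F}_0)$, $Q/G = SE(2)$, and $\mu = (0,\Gamma)$. The hypothesis $G_\mu = G$ holds by the first bullet, so the theorem produces a symplectic diffeomorphism between $(T^\ast Q)_{(0,\Gamma)}$ and $T^\ast SE(2)$ carrying the form $\omega_{\mathrm{can}} - B_{(0,\Gamma)}$; the explicit diffeomorphism is the map $\varphi_\Gamma = \varphi_0 \circ {\mathrm{shift}}_\Gamma$ of \eqref{dirmap}, built in the two preceding lemmas. It then remains to identify the magnetic term. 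By \eqref{magform} we have $B_{(0,\Gamma)} = \mathrm{pr}^\ast \beta_{(0,\Gamma)}$ with $\beta_{(0,\Gamma)}$ determined by $\mathrm{pr}^\ast \beta_{(0,\Gamma)} = \mathbf{d}\langle (0,\Gamma), \mathcal{A}\rangle = \mathbf{d}\mathcal{A}_\Gamma$; the key identity \eqref{curveq} then gives $\mathbf{d}\mathcal{A}_\Gamma = \mathrm{pr}^\ast \mathcal{B}_\Gamma$, so that $\beta_{(0,\Gamma)}$ descends to the curvature component $\mathcal{B}_\Gamma$ on $SE(2)$, which Proposition~\ref{prop:gammacurv} evaluates as $\Gamma \mathbf{e}_x^\ast \wedge \mathbf{e}_y^\ast$ at the identity. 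Combining these yields the left-invariant form \eqref{sympstruct}.

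The third bullet I would read off from the reduced-Hamiltonian proposition preceding the summary: pushing the Hamiltonian \eqref{almostredham} through $\varphi_\Gamma$ by means of the momentum relation $\pi = \mathfrak{m}(\pi_b)$ and using that the vortical contribution $\delta\Psi$ vanishes for irrotational flow gives $H_{\mathrm{red}}(g,\pi) = \tfrac{1}{2}\pi^T \mathbb{M}^{-1}\pi$ with $\mathbb{M} = \mathbb{M}_b + \mathbb{M}_f$, which is exactly \eqref{redhamil}.

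The main obstacle in this synthesis is not any of the bookkeeping above but the identification $\beta_{(0,\Gamma)} = \mathcal{B}_\Gamma$, i.e. recognizing the abstract magnetic term of cotangent-bundle reduction as the curvature of the Neumann connection. This rests on \eqref{curveq}, and hence on showing that the $[\mathcal{A},\mathcal{A}]$ contribution in the Cartan structure equation drops out; that in turn uses the identity \eqref{magic} together with the fact that $\mathbf{d}\alpha_\Gamma$ is supported inside the rigid body and therefore integrates to zero over the fluid domain $\mathcal{F}$. Once this identification is secured, the three bullets follow by direct assembly.
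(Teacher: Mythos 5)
Your proposal is correct and takes essentially the same route as the paper: the theorem is a summary of the section, with the first bullet following from \eqref{coad}, the second from Theorem~\ref{thm:cotbundle} combined with the identification $\mathrm{pr}^\ast \mathcal{B}_\Gamma = \mathbf{d}\mathcal{A}_\Gamma$ of \eqref{curveq} and the computation in Proposition~\ref{prop:gammacurv}, and the third from the reduced-Hamiltonian proposition via $\pi = \mathfrak{m}(\pi_b)$ and the vanishing of the vortical term $\delta\Psi$. You also correctly single out the recognition of the abstract magnetic term as the Neumann-connection curvature (resting on \eqref{magic} and the fact that $\mathbf{d}\alpha_\Gamma$ is supported inside the body) as the only genuinely nontrivial step.
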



\subsection{Reduction by the Euclidian Symmetry Group} \label{sec:eucsymm}

In the previous section, we have eliminated the particle relabeling symmetry by restricting to the fluid configurations that had circulation $\Gamma$ and no external vorticity.   However, after reducing by $\mathrm{Diff}_{\mathrm{vol}}(\mathcal{F}_0)$, the solid-fluid system is still invariant under global translations and rotations.  In other words, the group $SE(2)$ acts as a symmetry group for the reduced equations.  This is clear from Theorem~\ref{thm:reddiff}: the group $SE(2)$ acts on $T^\ast SE(2)$ by left translations, and both the reduced symplectic form as well as the Hamiltonian are invariant under that action.

\paragraph{The Reduced Poisson Structure.}  The symplectic structure \eqref{sympstruct} on $T^\ast SE(2)$ is invariant under the left action on $SE(2)$ on itself, and hence induces a Poisson bracket on the dual space $\mathfrak{se}(2)^\ast$.  If the original symplectic structure were the canonical one, the reduced Poisson bracket would simply be the (minus) Lie-Poisson bracket \eqref{lpbracket}.  However, because of the magnetic term in the symplectic structure, additional terms arise in the expression for the reduced Poisson bracket.  The general form of the magnetic Poisson bracket is described in Theorem~\ref{thm:poisson}. In the case of the rigid body with circulation, the first term in (\ref{bracket}) is the Lie-Poisson
bracket on $\mathfrak{se}(2)^\ast$, given by (\ref{liepoisson}).  The second term in (\ref{bracket}) is due to the magnetic two-form.
The entire Poisson bracket is then given by 
\begin{equation} \label{eucpoisson}
\{F, G\}_\mathcal{B} = \{F, G\}_{\mathfrak{se}(2)^\ast} - \Gamma \left(
  \frac{\partial F}{\partial P_x} \frac{\partial G}{\partial P_y} -
  \frac{\partial F}{\partial P_y} \frac{\partial G}{\partial P_x} \right).
\end{equation}
To make this bracket more explicit,
we may evaluate the Poisson bracket on the coordinate functions on $\mathfrak{se}(2)^\ast$:
\begin{equation} \label{pbcoord}
\{P_x, P_y\}_\mathcal{B} = -\Gamma, 
\quad 
\{\Pi, P_x\}_\mathcal{B} = -P_y,
\quad 
\{\Pi, P_y\}_\mathcal{B} = P_x.
\end{equation}
This emphasizes another advantage of deriving the Poisson brackets through reduction: any reduced Poisson bracket  automatically satisfies the Jacobi identity, obviating the need for any explicit computations.  While these computations would have been straightforward in this case, this is not always so (see \cite{MarsdenRatiu} for examples).

\paragraph{The Equations of Motion.}

Having established the expression for the reduced Poisson bracket on $\mathfrak{se}(2)^\ast$, we now turn to the reduced Hamiltonian.  Since $H_{\mathrm{red}}$ on $T^\ast SE(2)$ in \eqref{redhamilt} is written in terms of a left trivialization of $T^\ast SE(2)$, we immediately obtain that the reduced Hamiltonian on $\mathfrak{se}(2)^\ast$ is given by 
\begin{equation} \label{redham}
	H(\pi) = \frac{1}{2} \pi^T \, \mathbb{M}^{-1} \, \pi.
\end{equation}

The equations of motion relative to the Poisson bracket \eqref{eucpoisson} hence take exactly the form
given in~\eqref{EoM1} which we repeat here for completeness 
\begin{equation} \label{EoM}
\begin{split}
	\dot{\Pi} & = (\mathbf{P} \times \mathbf{V}) \cdot \mathbf{b}_3 \\
	\dot{\mathbf{P}} & =  \mathbf{P} \times \Omega\mathbf{b}_3 + \Gamma \mathbf{b}_3 \times \mathbf{V}
\end{split}
\end{equation}
where we have used the fact that 
\[
	\frac{\partial H}{\partial \Pi} = \Omega \quad \text{and} \quad 
		\frac{\partial H}{\partial \mathbf{P}} = \mathbf{V}.
\]
As mentioned in the introduction, these equations occurred first in the work of Chaplygin and Lamb, and were given a sound Hamiltonian foundation by \cite{BoMa06} and \cite{BoKoMa2007}.  Moreover, the equations \eqref{EoM} are a special case of the equations of motion derived by \cite{Sh2005}, \cite{BoMaRa2007} and \cite{KaOs08} for a cylinder of arbitrary shape with circulation interacting with point vortices.  
Our equations are slightly different from the ones derived by \cite{Ch1933} and used in \cite{BoMa06}, but can be brought into that form by diagonalizing the mass matrix $\mathbb{M}$ and rewriting the equations \eqref{EoM} in terms of velocities rather than the momenta.

\paragraph{The Kutta-Zhukowski Force.}  It is worthwhile to point out the geometric significance of the equations of motion.  For $\Gamma = 0$, the equations \eqref{EoM} reduce to the classical Kirchhoff equations, which are seen to be Lie-Poisson equations on $\mathfrak{se}(2)^\ast$ (a more general version of this result already appears in \cite{Leonard1997}).  When the circulation $\Gamma$ is non-zero, an additional gyroscopic force appears in the equations of motion, which is traditionally referred to as the \emph{\textbf{Kutta-Zhukowski force}}.  In coordinates, this force is given by 
\[
	 \Gamma \mathbf{b}_3 \times \mathbf{V} = \Gamma (-V_y, V_x, 0),
\]
and it follows that the force is proportional to $\Gamma$ and at right angles to $\mathbf{V}$.   From a geometric point of view, the Kutta-Zhukowski force arises because of the magnetic terms in the Poisson bracket \eqref{eucpoisson}.  Since the magnetic terms are in turn generated by the curvature of the Neumann connection, we have hence established the Kutta-Zhukowski force as a curvature-related effect.

A similar effect arises in the dynamics of a charged particle in a magnetic field.  Here, the particle moves under the influence of the Lorentz force, which is again a gyroscopic force whose magnitude is proportional to the charge $e$ of the particle.  In the Kaluza-Klein approach however, the Lorentz force becomes part of the geometry when the magnetic potential is interpreted as a connection whose curvature is precisely the magnetic field strength tensor.

\paragraph{The $\mathcal{B}_\Gamma$-potential.}  We now turn to the computation of the $\mathcal{B}\mathfrak{h}$-potential $\psi$ as in definition~\ref{def:bgpot}.  For the rigid body with circulation, we will refer to $\psi : SE(2) \rightarrow \mathfrak{se}(2)^\ast$ as the $\mathcal{B}_\Gamma$-potential, in order to emphasize the underlying magnetic form $\mathcal{B}_\Gamma$.

\begin{proposition}
	The $\mathcal{B}_\Gamma$-potential $\psi : SE(2) \rightarrow \mathfrak{se}(2)^\ast$ is given by 
	\begin{equation} \label{psiexpr}
		\psi(R_\theta, \mathbf{x}_0) =  
			\left(-\frac{\Gamma}{4} \left\Vert \mathbf{x}_0 \right\Vert^2, 
				 \frac{\Gamma}{2} \mathbb{J} \mathbf{x}_0 
			\right).
	\end{equation}
\end{proposition}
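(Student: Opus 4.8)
The plan is to compute the $\mathcal{B}_\Gamma$-potential directly from its defining equation \eqref{defbgpot}, namely $\mathbf{i}_{\xi_{SE(2)}} B_\Gamma = \mathbf{d}\langle \psi, \xi\rangle$, where $B_\Gamma$ is the magnetic two-form on $T^\ast SE(2)$ obtained by pulling back $\mathcal{B}_\Gamma$ via the cotangent bundle projection. Since Proposition~\ref{prop:gammacurv} gives $\mathcal{B}_\Gamma$ as the \emph{left-invariant} two-form with value $\Gamma\, \mathbf{e}^\ast_x \wedge \mathbf{e}^\ast_y$ at the identity, the first step is to transport this expression to an arbitrary group element $(R_\theta, \mathbf{x}_0) \in SE(2)$ using left translation. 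I would use the standard parametrization of $SE(2)$ and the basis $(\mathbf{e}_\theta, \mathbf{e}_x, \mathbf{e}_y)$ of $\mathfrak{se}(2)$ from Appendix~\ref{appendix:rigidgroup}, writing down the left-invariant one-forms dual to these generators in terms of $\mathbf{d}\theta$, $\mathbf{d}x_0$, $\mathbf{d}y_0$.

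The key computation is then to evaluate $B_\Gamma = \Gamma\, \mathbf{e}^\ast_x \wedge \mathbf{e}^\ast_y$ as a concrete two-form in the coordinates $(\theta, x_0, y_0)$. Because the translational components mix under rotation, the left-invariant forms $\mathbf{e}^\ast_x$ and $\mathbf{e}^\ast_y$ will be combinations of $\mathbf{d}x_0$, $\mathbf{d}y_0$ with coefficients involving $\cos\theta$ and $\sin\theta$; however, the wedge $\mathbf{e}^\ast_x \wedge \mathbf{e}^\ast_y$ should collapse to something clean. Once $B_\Gamma$ is written explicitly, I would contract it with the infinitesimal generator $\xi_{SE(2)}$ of the left $SE(2)$-action for a generic $\xi = (\omega, u, v) \in \mathfrak{se}(2)$, producing a one-form on $SE(2)$, and then recognize this one-form as an exact differential $\mathbf{d}\langle \psi, \xi\rangle$. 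Matching against the proposed answer, I expect the angular component of $\psi$ to integrate to $-\tfrac{\Gamma}{4}\|\mathbf{x}_0\|^2$ and the translational components to assemble into $\tfrac{\Gamma}{2}\mathbb{J}\mathbf{x}_0$, where $\mathbb{J}$ is the standard symplectic rotation matrix, consistent with the affine-action structure of Proposition~\ref{prop:afforbit}.

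The main obstacle I anticipate is bookkeeping the left-versus-right conventions correctly and getting the infinitesimal generator of the \emph{left} $SE(2)$-action right, since the magnetic form $B_\Gamma$ is left-invariant but the potential $\psi$ is tied (via \eqref{onecocycle}) to the \emph{coadjoint} action, and sign errors propagate easily here. In particular, I must check that the generator $\xi_{SE(2)}(R_\theta, \mathbf{x}_0)$ of the left action, evaluated and contracted against $B_\Gamma$, yields a \emph{closed} one-form — this closedness is exactly the integrability condition guaranteeing $\psi$ exists, and it should hold automatically because $\mathcal{B}_\Gamma$ drops to $SE(2)$ and is left-invariant. After confirming closedness, the final step is purely to antidifferentiate and fix the constant of integration by the normalization $\psi(e) = 0$, which forces the quadratic form $\|\mathbf{x}_0\|^2$ to vanish at the origin as required. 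The routine part is the antidifferentiation; the delicate part is ensuring the coefficients $-\tfrac{\Gamma}{4}$ and $\tfrac{\Gamma}{2}$ emerge with the correct signs from the contraction.
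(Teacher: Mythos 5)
Your proposal is correct and follows essentially the same route as the paper: both start from the defining relation \eqref{defbgpot}, exploit the left $SE(2)$-invariance of $\mathcal{B}_\Gamma$, contract with the infinitesimal generator of the left action (which the paper also computes as $\bar{\Omega}\,\partial/\partial\theta + (\bar{\mathbf{V}} - \bar{\Omega}\mathbb{J}\mathbf{x}_0)\cdot\nabla$), and integrate using the normalization $\psi(e)=0$. The only cosmetic difference is that you transport the two-form out to the point $(R_\theta, \mathbf{x}_0)$ via the left-invariant coframe (where, as you anticipate, $\mathbf{e}^\ast_x \wedge \mathbf{e}^\ast_y$ collapses to $\mathbf{d}x_0 \wedge \mathbf{d}y_0$), whereas the paper transports the tangent vectors back to the identity via $\mathrm{Ad}_{(R_\theta,\mathbf{x}_0)^{-1}}$ --- the two computations are identical in content.
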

\begin{proof}
By \eqref{defbgpot}, we have that $\mathcal{B}_\Gamma(\xi_{SE(2)}, \eta_{SE(2)}) = \mathbf{d} \psi_\xi \cdot  \eta_{SE(2)}$ for $\xi, \eta \in \mathfrak{se}(2)$.  The infinitesimal generators for the left action of $SE(2)$ on itself are defined by $\xi_{SE(2)}(g) := \xi \cdot g$, and similarly for $\eta_{SE(2)}$.  Because of the left $SE(2)$-invariance of $\mathcal{B}_\Gamma$, we have that this is equivalent to
\begin{equation} \label{bgpot}
	\mathcal{B}_\Gamma(e) ( \mathrm{Ad}_{g^{-1}} \xi, \mathrm{Ad}_{g^{-1}} \eta )
		= \left< \mathbf{d} \psi_\xi , \eta_{SE(2)}(g) \right>.
\end{equation}

We now write $g = (R_\theta, \mathbf{x}_0)$, $\xi := (\Omega, \mathbf{V})$ and $\eta = (\bar{\Omega}, \bar{\mathbf{V}})$.  The vector $\mathrm{Ad}_{g^{-1}} \xi$ is then given by 
\[
\mathrm{Ad}_{(R_\theta, \mathbf{x}_0)^{-1}} (\Omega, \mathbf{V}) = (\Omega, R_\theta^T ( \mathbf{V} + \Omega \mathbf{b}_3 \times \mathbf{x}_0 )),
\]
while the infinitesimal generator $\eta_{SE(2)}$ is given by 
\[
	(\bar{\Omega}, \bar{\mathbf{V}})_{SE(2)}(R_\theta, \mathbf{x}_0) = 
	\bar{\Omega} \frac{\partial}{\partial \theta} 
		+ \left( \bar{\mathbf{V}} - \bar{\Omega} \mathbb{J} \mathbf{x}_0 \right) \cdot \nabla.
\]
Substituting these expressions into the definition \eqref{bgpot} then yields 
\[
		\psi_{(\Omega, \mathbf{V})}(R_\theta, \mathbf{x}_0) =  
			-\Omega \frac{\Gamma}{4} \left\Vert \mathbf{x}_0 \right\Vert^2
				+ \frac{\Gamma}{2} \mathbf{V}^T \mathbb{J} \mathbf{x}_0,
\]
(up to an arbitrary constant, which we set to zero)
which is equivalent to \eqref{psiexpr}.
\end{proof}

\paragraph{The Symplectic Leaves in $\mathfrak{se}(2)^\ast$.}

Proposition~\ref{prop:afforbit} offers a simple prescription for the symplectic leaves of the magnetic Poisson structure \eqref{eucpoisson}.  Using the expression \eqref{psiexpr} for the $\mathcal{B}_\Gamma$-potential, the affine action of $SE(2)$ on $\mathfrak{se}(2)^\ast$ is given by 
\begin{equation} \label{affactioncirc}
	(R_\theta, \mathbf{x}_0) \cdot (\Pi, \mathbf{P}) = 
		\left( \Pi - \mathbf{P}^T R_\theta^T \mathbb{J} \mathbf{x}_0 - 
			\frac{\Gamma}{4} \left\Vert \mathbf{x}_0 \right\Vert^2, 
			R_\theta \mathbf{P} + \frac{\Gamma}{2} \mathbb{J} \mathbf{x}_0 \right),
\end{equation}
where $(R_\theta, \mathbf{x}_0) \in SE(2)$ and $(\Pi, \mathbf{P}) \in \mathfrak{se}(2)^\ast$.  The orbits of this action are paraboloids of revolution with symmetry axis the $\Pi$-axis (see figure~\ref{fig:orbits}): they are level sets of the Casimir function 
\begin{equation} \label{orbits}
	\Phi = \Pi + \frac{1}{\Gamma} \left\Vert \mathbf{P} \right\Vert^2.
\end{equation}
Note how the case with zero circulation is a singular limit of \eqref{orbits}, and compare this with the picture of the standard coadjoint orbits in $\mathfrak{se}(2)^\ast$ (which can be obtained by setting $\Gamma = 0$ in \eqref{affactioncirc}), which are cylinders around the $\Pi$-axis, together with the individual points of the $\Pi$-axis.  The trajectories of the system lie in the intersection between the symplectic leaves and the level surfaces of the Hamiltonian; see \cite{BoMa06}.  An explicit integration by quadratures was obtained by \cite{Ch1933}.

\begin{figure}
\begin{center}
\includegraphics[scale=.5]{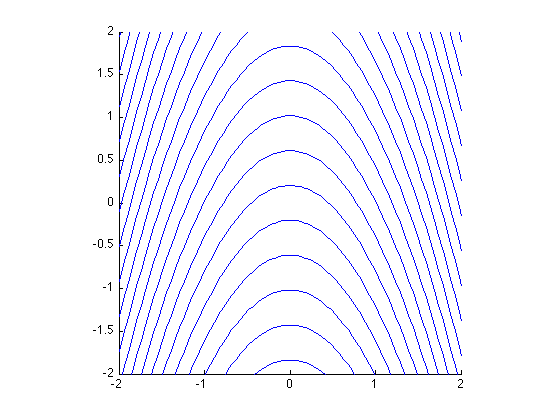}
\end{center}
\caption{Intersection with the $(\Pi, P_x)$-plane of the orbits of the affine action \eqref{affactioncirc} in $\mathfrak{se}(2)^\ast$.} \label{fig:orbits}
\end{figure}

\section{Geodesic Flow on the Oscillator Group}  
\label{sec:geodosc}

The dynamics of a rigid body with circulation bears a strong resemblance to the motion of a charged particle in a magnetic field.  In both cases, the Hamiltonian is the kinetic energy and the system is acted upon by a gyroscopic force, either the Kuta-Zhukowksi force, or the Lorentz force.  In the case of a magnetic particle, the Lorentz force can be made geometric by means of the \emph{\textbf{Kaluza-Klein}} description.   The trajectory of a magnetic particle is then a geodesic in a high-dimensional space which is the product of the original configuration space $M$ and the group $U(1)$.  See \cite{MarsdenRatiu} for an overview and \cite{Sternberg1977} for the extension to Yang-Mills fields.

One can now ask the question whether a similar description exists for the rigid body with circulation, such that its motion is geodesic in an appropriate higher-dimensional space.  Surprisingly, the answer turns out to be positive.  The relevant space is the \textbf{\emph{oscillator group}} $\Osc$, which is a central extension of $SE(2)$ by the real line $\mathbb{R}$.  Below, we analyze the structure of this group and we give an explicit expression for the Lie-Poisson bracket on the dual of its Lie algebra.  We then show that the equations of motion induced by this bracket are closely related to the equations of motion for the rigid body with circulation.

The use of central extensions in the study of mechanical systems is not new.  We mention here in particular the work of \cite{OvKh1987} on the KdV equation as a geodesic equation on the Virasoro group, and the work of \cite{Vi2001}, in which the structure of the geodesic equations on an extension of a Lie group is studied in general.  A detailed account of the geometry of central extensions, including the oscillator group, can be found in \cite{MarsdenHamRed}.

\paragraph{The Oscillator Group.}
 The oscillator group $\Osc$ is a central extension of $SE(2)$, \emph{i.e.} there is an exact sequence 
\[
	0 \rightarrow \mathbb{R} \rightarrow \Osc \rightarrow SE(2) \rightarrow \{e\}.
\]
The multiplication in $\Osc$ is determined by the specification of a real-valued $SE(2)$-two-cocycle $B : SE(2) \times SE(2) \rightarrow \mathbb{R}$:
\begin{equation} \label{groupcocycle}
	B( (R_\theta, \mathbf{x}_0), (R_\psi, \mathbf{y}_0) ) := \omega_{\mathbb{R}^2}(\mathbf{x}_0, R_\theta \mathbf{y}_0) = \mathbf{x}_0 \cdot \mathbb{J} R_\theta \mathbf{x}_0.
\end{equation}
Here, $\omega_{\mathbb{R}^2}$ is the standard symplectic area form on $\mathbb{R}^2$:
\[
	\omega_{\mathbb{R}^2}(\mathbf{x}, \mathbf{y}) := \mathbf{x}\cdot \mathbb{J} \mathbf{y},
\]
where $\mathbb{J}$ is the symplectic matrix \eqref{sympmat}.

The Lie algebra of the oscillator group is denoted as $\osc$ and has as its underlying vector space the space $\mathfrak{se}(2) \times \mathbb{R}$.  The Lie bracket is determined by the Lie algebra two-cocycle $C : \mathfrak{se}(2) \times \mathfrak{se}(2) \rightarrow \mathbb{R}$ induced by $B$ as follows
\[
	C(v_1, v_2) := \frac{\partial^2}{\partial s \partial t} \Big|_{s, t = 0} ( B(g(t), h(s)) - B(h(s), g(t)) ),
\]
where $g(t)$ and $h(s)$ are smooth curves in $SE(2)$ such that $\dot{g}(0) = v_1$ and $\dot{h}(0) = v_2$.  Explicitly, $C$ is given by 
\begin{equation} \label{algcocycle}
	C((\Omega_1, \mathbf{V}_1), (\Omega_2, \mathbf{V}_2)) = 2 \omega_{\mathbb{R}^2}(\mathbf{V}_1, \mathbf{V}_2) = 2 \mathbf{V}_1 \cdot \mathbb{J} \mathbf{V}_2.
\end{equation}
The bracket of the oscillator algebra is then given by 
\begin{align*}
	[(\Omega_1, \mathbf{V}_1, a), (\Omega_2, \mathbf{V}_2, b)] & =  ([(\Omega_1, \mathbf{V}_1), (\Omega_2, \mathbf{V}_2)], 
		C((\Omega_1, \mathbf{V}_1), (\Omega_2, \mathbf{V}_2)) \\
		& = (0, -\Omega_1 \mathbb{J}\mathbf{V}_2 + \Omega_2 \mathbb{J}\mathbf{V}_1, 
			2 \mathbf{V}_1 \cdot \mathbb{J} \mathbf{V}_2).
\end{align*}

Further information about the oscillator group and its algebra can be found in \cite{Streater1967} and 
\cite{MarsdenHamRed}.

\paragraph{The Lie-Poisson Bracket on $\osc^\ast$.}
We now determine the Lie-Poisson bracket on the dual Lie algebra $\osc^\ast$ and relate this expression with the Poisson bracket for the rigid body with circulation.
Note first that as a vector space, $\osc^\ast$ is just $\mathfrak{se}(2)^\ast \times \mathbb{R}$, so that an element $\nu$ of $\osc^\ast$ can be written as $\nu := (\pi, p )$, where $\pi = (\Pi, \mathbf{P}) \in \mathfrak{se}(2)^\ast$ and $p \in \mathbb{R}$.

\begin{proposition} \label{prop:osc}
	The Lie-Poisson bracket on $\osc^\ast$ is given by 
	\begin{equation} \label{oscbracket}
		\{ f, g \}_{\osc^\ast}(\pi, p) = 
		\{ f, g \}_{\mathfrak{se}(2)^\ast}(\pi) - p C\left(\frac{\delta f}{\delta \pi}, \frac{\delta g}{\delta \pi} \right),
	\end{equation}
	where $C$ is the $\mathfrak{se}(2)$-two-cocycle given by (\ref{algcocycle}).
\end{proposition}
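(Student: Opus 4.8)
The plan is to compute the Lie-Poisson bracket on $\osc^\ast$ directly from the general formula for the (minus) Lie-Poisson structure on the dual of any Lie algebra, specialized to the central-extension bracket of $\osc$ recorded just above the statement. Recall that for a Lie algebra $\mathfrak{g}$ the bracket on $\mathfrak{g}^\ast$ reads $\{f, g\}(\nu) = -\langle \nu, [\delta f/\delta \nu, \delta g/\delta \nu]\rangle$, with the sign convention already used in Theorem~\ref{thm:poisson}. The whole proof is then a matter of inserting the explicit bracket of $\osc$ and keeping track of the two factors in $\osc^\ast = \mathfrak{se}(2)^\ast \times \mathbb{R}$.

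First I would pin down the functional derivative. Writing $\nu = (\pi, p)$ with $\pi \in \mathfrak{se}(2)^\ast$ and $p \in \mathbb{R}$, and using the pairing $\langle (\pi, p), (\xi, a)\rangle = \langle \pi, \xi\rangle + p a$, the variational derivative $\delta f/\delta \nu \in \osc$ splits as $(\delta f / \delta \pi,\, \partial f/\partial p)$, where $\delta f/\delta \pi \in \mathfrak{se}(2)$ is the usual derivative holding $p$ fixed and $\partial f/\partial p$ is the ordinary partial derivative in the central direction.

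Next I would substitute the central-extension bracket. Setting $\xi_f := \delta f/\delta \pi$ and $\xi_g := \delta g/\delta \pi$, the bracket of the two functional derivatives is $([\xi_f, \xi_g]_{\mathfrak{se}(2)},\, C(\xi_f, \xi_g))$; note that the $\partial f/\partial p$ and $\partial g/\partial p$ components drop out entirely because the extension is central, so the center never contributes to a bracket. Pairing with $\nu$ and using bilinearity yields
\[
\langle \nu, [\delta f/\delta \nu, \delta g/\delta \nu]\rangle = \langle \pi, [\xi_f, \xi_g]_{\mathfrak{se}(2)}\rangle + p\, C(\xi_f, \xi_g).
\]
Multiplying by $-1$, the first term is exactly the (minus) Lie-Poisson bracket $\{f, g\}_{\mathfrak{se}(2)^\ast}(\pi)$ of \eqref{liepoisson}, and the second is $-p\, C(\delta f/\delta \pi, \delta g/\delta \pi)$, which together give precisely \eqref{oscbracket}.

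I do not anticipate a genuine obstacle here: the computation is essentially bookkeeping. The one point requiring care is the consistent use of the minus-sign convention, so that the $\mathfrak{se}(2)^\ast$ term matches \eqref{liepoisson}, together with the observation that, because the cocycle $C$ lands in the center, the $\mathbb{R}$-component of $\nu$ never enters a bracket of coordinate functions --- equivalently, $p$ is a Casimir of $\{\cdot,\cdot\}_{\osc^\ast}$. Making this last remark explicit is worthwhile, since it is exactly what will later permit fixing $p$ to a constant value and thereby recovering the magnetic bracket \eqref{eucpoisson} of the rigid body with circulation.
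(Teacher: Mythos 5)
Your proposal is correct and follows essentially the same route as the paper: both insert the decomposition $\nu = (\pi,p)$, $\delta f/\delta\nu = (\delta f/\delta\pi, \delta f/\delta p)$ into the minus Lie-Poisson formula and use the central-extension bracket $([\xi_f,\xi_g], C(\xi_f,\xi_g))$, in which the central components of the functional derivatives drop out. The only cosmetic difference is that the paper phrases the computation for linear functions $f,g$, while you compute in general form and add the (correct and useful) remark that $p$ is a Casimir.
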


\begin{proof}
As usual, it is sufficient to determine the Lie-Poisson bracket on linear functions $f, g$ on $\osc^\ast$, for which
\[
	f(\nu) = \left\langle \nu, \frac{\delta f}{\delta \nu} \right\rangle, \quad \text{for $\frac{\delta f}{\delta \nu} \in \osc$},
\]
and similar for $g$.  Since $\nu \in \osc^\ast$, we may write $\nu = (\pi, p)$, where $\pi \in \mathfrak{se}(2)^\ast$ and $p \in \mathbb{R}$.  Likewise, we may decompose the variational derivative as 
\[
	\frac{\delta f}{\delta \nu} = \left(\frac{\delta f}{\delta \pi}, \frac{\delta f}{\delta p} \right) 
		\in \mathfrak{se}(2) \times \mathbb{R}.
\] 
The minus Lie-Poisson bracket on $\osc^\ast$ is then given by a similar formula as \eqref{lpbracket}, and we get
\begin{align*}
	\{f, g\}_{\osc^\ast} & = -\left<\nu, \left[\frac{\delta f}{\delta \nu}, \frac{\delta g}{\delta \nu} \right] \right> \\
	& = -\left< (\pi, p), \left( \left[\frac{\delta f}{\delta \pi}, \frac{\delta g}{\delta \pi} \right], C\left(\frac{\delta f}{\delta \pi}, \frac{\delta g}{\delta \pi} \right) \right) \right> \\
	& = -\left< \pi, \left[\frac{\delta f}{\delta \pi}, \frac{\delta g}{\delta \pi} \right] \right>
		- p C\left(\frac{\delta f}{\delta \pi}, \frac{\delta g}{\delta \pi} \right),
\end{align*}
so that we obtain the expression (\ref{oscbracket}).
\end{proof}

In coordinates, the Poisson structure of proposition~\ref{prop:osc} is given by 
\begin{equation} \label{poissosc}
\{ f, g \}_{\osc^\ast}(\pi, p) = 
	(\nabla_{(\pi, p)} f)^T
		\begin{pmatrix} 
    0 & -P_y & P_x & 0 \\
    P_y & 0 & -p & 0 \\
    -P_x & p & 0 &0 \\
    0 & 0 & 0 & 0
    \end{pmatrix}
    	(\nabla_{(\pi, p)} g).
\end{equation}

These coordinate expressions also serve to clarify the link between the description of the rigid body on the oscillator group and the Euclidian group: each of the subspaces $\mathfrak{se}(2)^\ast$ is a Poisson submanifold of $\mathrm{osc}^\ast$, as in the following proposition.

\begin{proposition} \label{prop:inc}
	Let $\Gamma \in \mathbb{R}$.  The inclusion $\iota_\Gamma : \mathfrak{se}(2)^\ast \hookrightarrow \mathrm{osc}^\ast$ given by $\iota_\Gamma(\pi) = (\pi, \Gamma)$ is a Poisson map.
\end{proposition}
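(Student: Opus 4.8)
The plan is to verify the defining property of a Poisson map directly: for every pair $f, g \in C^\infty(\osc^\ast)$ I must show that
\begin{equation*}
	\{ f \circ \iota_\Gamma, g \circ \iota_\Gamma \}_\mathcal{B}(\pi) = \{f, g\}_{\osc^\ast}(\iota_\Gamma(\pi)) = \{f, g\}_{\osc^\ast}(\pi, \Gamma),
\end{equation*}
where the bracket on the source $\mathfrak{se}(2)^\ast$ is the magnetic bracket \eqref{eucpoisson} carrying the circulation $\Gamma$ fixed in the proposition, and the target bracket is the Lie--Poisson bracket \eqref{oscbracket} on $\osc^\ast$. The conceptual reason this should hold is that the $\mathbb{R}$-factor of $\osc = \mathfrak{se}(2) \times \mathbb{R}$ is central, so its dual coordinate $p$ is a Casimir of the Lie--Poisson structure; indeed, the last row and column of the structure matrix in \eqref{poissosc} vanish identically. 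Consequently each affine hyperplane $\{p = \Gamma\}$ is a Poisson submanifold of $\osc^\ast$, and the inclusion of a Poisson submanifold is automatically a Poisson map. The substance of the argument is then only to identify the bracket that $\{\cdot, \cdot\}_{\osc^\ast}$ induces on this slice with $\{\cdot, \cdot\}_\mathcal{B}$.

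First I would record the elementary fact that, because $\iota_\Gamma$ merely appends the constant $\Gamma$ in the central slot, the derivative $\partial/\partial p$ never enters the target bracket, and the variational derivatives match: $\delta(f \circ \iota_\Gamma)/\delta \pi\,(\pi) = \delta f/\delta \pi\,(\pi, \Gamma)$. This is exactly what makes \emph{pull back, then bracket} agree with \emph{bracket, then restrict to} $p = \Gamma$. Since a Poisson bracket is a biderivation, it then suffices to check the required identity on the linear coordinate functions $\Pi, P_x, P_y$, whose differentials span the cotangent space at every point of $\mathfrak{se}(2)^\ast$.

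Next I would carry out this coordinate comparison. Setting $p = \Gamma$ in the structure matrix \eqref{poissosc} and discarding the now-trivial $p$-row and $p$-column leaves the three relations
\begin{equation*}
	\{\Pi, P_x\}_{\osc^\ast}\big|_{p = \Gamma} = -P_y, \qquad
	\{\Pi, P_y\}_{\osc^\ast}\big|_{p = \Gamma} = P_x, \qquad
	\{P_x, P_y\}_{\osc^\ast}\big|_{p = \Gamma} = -\Gamma,
\end{equation*}
which are term by term identical to the relations \eqref{pbcoord} defining $\{\cdot, \cdot\}_\mathcal{B}$. As both brackets agree on this generating set and both are biderivations, they coincide, establishing the displayed identity for arbitrary $f$ and $g$.

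The one point requiring genuine care -- the main obstacle -- is the matching of constants between the abstract magnetic correction $-\Gamma(\partial_{P_x}F\,\partial_{P_y}G - \partial_{P_y}F\,\partial_{P_x}G)$ in \eqref{eucpoisson} and the central term $-p\,C(\delta f/\delta \pi, \delta g/\delta \pi)$ in \eqref{oscbracket}. To avoid any ambiguity in the normalization of the cocycle $C$ in \eqref{algcocycle} and of the symplectic matrix $\mathbb{J}$, I would work throughout with the explicit structure matrix \eqref{poissosc}, which is already consistent with \eqref{pbcoord}; this reduces the whole proof to the transparent coordinate comparison above and sidesteps the bookkeeping over the scalar factor in the cocycle.
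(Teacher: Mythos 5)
Your proof is correct and takes essentially the same route as the paper's: the paper's own proof is a one-line observation that setting $p = \Gamma$ in the coordinate expression \eqref{poissosc} reproduces the magnetic bracket \eqref{eucpoisson} on $\mathfrak{se}(2)^\ast$, which is precisely your coordinate comparison on the generating functions $\Pi, P_x, P_y$. The extra scaffolding you supply (the chain rule identity $\delta(f\circ\iota_\Gamma)/\delta\pi = \delta f/\delta\pi\,|_{p=\Gamma}$, the remark that $p$ is a Casimir so $\{p=\Gamma\}$ is a Poisson submanifold, and the biderivation argument) merely makes the paper's terse statement fully rigorous.
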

\begin{proof}
This can easily be seen from the coordinate expressions above: for $p = \Gamma$, the Poisson structure \eqref{oscbracket} coincides with \eqref{eucpoisson} on the submanifold $\mathfrak{se}(2)^\ast$.
\end{proof}

\paragraph{The Equations of Motion.}  The reduced kinetic energy Hamiltonian \eqref{redham} on $\mathfrak{se}(2)^\ast$ gives rise to a Hamiltonian ${H}_{\osc^\ast}$ on $\osc^\ast$ as follows:
\begin{align*}
	{H}_{\osc^\ast}(\pi, p) & = H(\pi) + \frac{p^2}{2}  = \frac{1}{2} \pi^T \, \mathbb{M}^{-1} \, \pi + \frac{p^2}{2}.
\end{align*}
The equations of motion for this Hamiltonian and the bracket \eqref{oscbracket} are then given in coordinates by 
\begin{align}  \label{eomosc}
	\dot{\Pi} & = (\mathbf{P} \times \mathbf{V}) \cdot \mathbf{b}_3, \nonumber \\
	\dot{\mathbf{P}} & =  \mathbf{P} \times \Omega\mathbf{b}_3 + p \mathbf{b}_3 \times \mathbf{V}, \\
	\dot{p} & = 0. \nonumber
\end{align}

\paragraph{Conclusions.}  We summarize the conclusions of this section in the following theorem.

\begin{theorem} Let $\Gamma$ be an element of $\mathbb{R}^2$.
\begin{itemize} 

	\item The Lie-Poisson structure on the oscillator algebra is given by \eqref{oscbracket} and the resulting equations of motion by \eqref{eomosc}.

	\item The dynamics of a rigid body with circulation $\Gamma$ takes place on the Poisson submanifold $p = \Gamma$ of $\mathrm{osc}^\ast$.

\end{itemize}
\end{theorem}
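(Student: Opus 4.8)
The plan is to assemble the two items directly from results already established in this section, since both are essentially corollaries of Proposition~\ref{prop:osc} and Proposition~\ref{prop:inc}. For the first item, the Lie-Poisson bracket \eqref{oscbracket} is precisely the content of Proposition~\ref{prop:osc}, so nothing new is required there. To obtain the equations of motion \eqref{eomosc}, I would compute Hamilton's equations $\dot{F} = \{F, H_{\osc^\ast}\}_{\osc^\ast}$ for the coordinate functions $\Pi$, $\mathbf{P}$ and $p$, using the explicit coordinate form \eqref{poissosc} of the bracket together with the Hamiltonian $H_{\osc^\ast}(\pi, p) = \frac{1}{2}\pi^T \mathbb{M}^{-1}\pi + \frac{p^2}{2}$. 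Since $\partial H_{\osc^\ast}/\partial \Pi = \Omega$ and $\partial H_{\osc^\ast}/\partial \mathbf{P} = \mathbf{V}$, the first two equations follow from the upper-left $3\times 3$ block of \eqref{poissosc}, with $p$ playing the role of $\Gamma$ in the Kutta-Zhukowski term. The equation $\dot{p} = 0$ is immediate from the fact that the last row and column of the Poisson matrix in \eqref{poissosc} vanish, \emph{i.e.} $p$ is a Casimir of the oscillator bracket.

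For the second item, the key observation is Proposition~\ref{prop:inc}: the inclusion $\iota_\Gamma : \mathfrak{se}(2)^\ast \hookrightarrow \osc^\ast$, $\iota_\Gamma(\pi) = (\pi, \Gamma)$, is a Poisson map, so that the hyperplane $p = \Gamma$ is a Poisson submanifold of $\osc^\ast$. Because $\dot{p} = 0$ in \eqref{eomosc}, the flow of $H_{\osc^\ast}$ leaves each such hyperplane invariant, so the restriction of the dynamics to $\{p = \Gamma\}$ is well-defined. On this submanifold the pulled-back Hamiltonian is $H_{\osc^\ast}\circ\iota_\Gamma(\pi) = H(\pi) + \Gamma^2/2$, which differs from the reduced Hamiltonian \eqref{redham} only by the additive constant $\Gamma^2/2$ and therefore generates the identical Hamiltonian vector field. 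Substituting $p = \Gamma$ directly into \eqref{eomosc} then reproduces exactly the Chaplygin-Lamb equations \eqref{EoM} for the rigid body with circulation $\Gamma$, and the bracket \eqref{oscbracket} restricted to $\{p=\Gamma\}$ coincides with the magnetic bracket \eqref{eucpoisson}, as already noted in the proof of Proposition~\ref{prop:inc}.

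The argument is almost entirely bookkeeping, so I do not anticipate a substantial obstacle; the one point that warrants care is the logical step that restricting a Hamiltonian system to a Poisson submanifold yields the Hamiltonian system of the restricted function. This I would justify by invoking the standard fact that a Poisson map pushes forward Hamiltonian vector fields of pulled-back Hamiltonians: since $\iota_\Gamma$ is Poisson, the field $X_{H_{\osc^\ast}\circ\iota_\Gamma}$ on $\mathfrak{se}(2)^\ast$ is $\iota_\Gamma$-related to $X_{H_{\osc^\ast}}$ on $\osc^\ast$, so the two flows are conjugate along $\iota_\Gamma$. The harmless constant $\Gamma^2/2$ is worth flagging explicitly, as it is the only discrepancy between $H_{\osc^\ast}|_{p=\Gamma}$ and \eqref{redham}; geometrically it is the (dynamically irrelevant) regularized self-energy of the circulatory flow that was discarded in arriving at \eqref{eq:T_solidfluid}.
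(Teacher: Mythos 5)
Your proposal is correct and follows essentially the same route as the paper: the theorem there is stated as a summary of Section~\ref{sec:geodosc}, and its justification is precisely the assembly you give — Proposition~\ref{prop:osc} for the bracket \eqref{oscbracket}, the coordinate computation of Hamilton's equations via \eqref{poissosc} yielding \eqref{eomosc}, and Proposition~\ref{prop:inc} for the Poisson submanifold $p = \Gamma$. Your added points (that $p$ is a Casimir, that the restricted Hamiltonian differs from \eqref{redham} only by the constant $\Gamma^2/2$, and the $\iota_\Gamma$-relatedness of Hamiltonian vector fields) simply make explicit steps the paper leaves implicit.
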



\section{Conclusions and Outlook}
\label{sec:outlook}

In this paper, we established the non-canonical Hamiltonian structure of \cite{BoMa06} by geometric means, and we showed how the geometric description gives new insight into classical results such as the Kutta-Zhukowski force.  We now discuss a number of open questions related to this description. 

\paragraph{Cocycles.}  The problem considered here seems to fit in a general class of systems where a remarkable interaction between cocycles and curvature forms is at play; examples include the work of \cite{HoKu1988} and \cite{CeMaRa2004} on the analogy between spin glasses and Yang-Mills fluids.  Recently, a comprehensive reduction theory has been developed for such systems (see \cite{GaRa2008} and \cite{GaRa2009}). It would be interesting to see whether this reduction theory can be directly applied to the rigid body with circulation considered here.

\paragraph{The Oscillator Group and Reduction.}  It is still not entirely clear why the description in terms of the oscillator group, highlighted in proposition~\eqref{prop:inc}, should exist.  Is there some way of deriving the oscillator group immediately from the unreduced space $Q$ and the particle relabeling symmetry?  
One way to do this goes back to the work of Kelvin, who dealt with circulation by placing a fixed surface in the fluid so that the fluid domain becomes simply connected.  The flow across the surface then appears as a new variable, and the reduced variables are precisely the coordinates on $\mathrm{Osc}$.  Kelvin's argument originally only held for fluids in a bounded container, but can be extended to unbounded flows as well.  This approach will be the subject of a future work.

%

%

\paragraph{Point Vortices and Vortical Structures.}  In \cite{VaKaMa2009}, the dynamics of a rigid body interacting with point vortices was investigated from a geometric point of view.  
It would be of interest to extend the results of this paper to the case of a rigid body with circulation moving in a field of point vortices or a general distribution of vorticity and, in particular, it is interesting to investigate whether one could
generalize the oscillator group description to the case of non-zero vorticity.
Lastly, we also intend to study the geometry of controlled articulated bodies moving in the field of point vortices or other vortical structures.  This setup has important applications in the theory of locomotion; see \cite{KMlocomotion} and~\cite{Kanso2009}.


\appendix

\section{The Rigid Body Group}
\label{appendix:rigidgroup}

In this appendix we discuss some elementary aspects of the special Euclidian group $SE(2)$, consisting of translations and rotations.  This material is standard and can be found, for instance, in \cite{MarsdenRatiu} and \cite{ArKoNe1997}.

\paragraph{The Rigid Body Group $SE(2)$.} An element $g \in SE(2)$ consists of a rigid rotation and a rigid translation: $g \equiv (R_\theta, \mathbf{x}_0)$, where $R_\theta \in SO(2)$ and $\mathbf{x}_0 \in \mathbb{R}^2$. 
The group multiplication and inversion in $SE(2)$ are respectively given by 
\[
	(R_\theta, \mathbf{x}_0) \cdot (R_\psi, \mathbf{y}_0) = 
	(R_{\theta + \psi}, \mathbf{x}_0 + R_\theta \mathbf{y}_0)
	\quad \text{and} \quad 
	(R_\theta, \mathbf{x}_0)^{-1} = (R_{-\theta}, - R_{-\theta} \mathbf{x}_0).
\]
It will often be useful to write an element $(R_\theta, \mathbf{x}_0) \in SE(2)$ as a matrix 
\begin{equation}  \label{matse}
 (R_\theta, \mathbf{x}_0) \leadsto
  \begin{pmatrix}
    R_\theta & \mathbf{x}_0 \\
    0 & 1
  \end{pmatrix}.
\end{equation}
Under this identification, 
the group composition and inversion in $\operatorname{SE}(2) $
are given by matrix multiplication and inversion.  The Lie algebra $\mathfrak{se}(2)$ of
$\operatorname{SE}(2)$ is parametrized by $\zeta =  (\Omega, \mathbf{V})^T$ and 
can be identified with $\mathbb{R}^3$ with the following Lie bracket:
\[
	[(\Omega_1, \mathbf{V}_1), (\Omega_2, \mathbf{V}_2)] 
		= (0, -\Omega_1 \mathbb{J} \mathbf{V}_2 + \Omega_2 \mathbb{J} \mathbf{V}_2)
\]
for all $\Omega_{1,2} \in \mathbb{R}$ and $\mathbf{V}_{1,2} \in \mathbb{R}^2$, where 
$\mathbb{J}$ is the symplectic matrix given by 
\begin{equation} \label{sympmat}
	\mathbb{J} = 
		\begin{pmatrix}
			0 & 1 \\
			-1& 0 
		\end{pmatrix}.
\end{equation}
We will often use the following basis of $\mathfrak{se}(2)$:
\begin{equation} \label{sebasis}
	\mathbf{e}_\Omega = (1, 0, 0)^T, \quad 
	\mathbf{e}_x = (0, 1, 0)^T \quad
	\mathbf{e}_y = (0, 0, 1)^T.
\end{equation}
The dual $\mathfrak{se}(2)^\ast$ of the Lie algebra of $SE(2)$ can also be identified with $\mathbb{R}^3$  by using the Euclidian inner product on $\mathfrak{se}(2) = \mathbb{R}^3$.  We denote a typical element of $\mathfrak{se}(2)^\ast$ by $\pi = (\Pi, \mathbf{P})$ with $\Pi \in \mathbb{R}$ and $\mathbf{P} = (P_x, P_y) \in \mathbb{R}^2$.  The notation is chosen to be reminiscent of angular and linear momentum in the body frame.  The Lie-Poisson bracket%
\footnote{We use the \emph{minus} Lie-Poisson bracket, which is the relevant choice of Poisson bracket for rigid body dynamics.  More information about this choice can be found in \cite{MarsdenRatiu}.} 
on $\mathfrak{se}(2)^\ast$ is then given by the standard formula:
\begin{equation} \label{lpbracket}
\{F, G\}_{\mathfrak{se}(2)^\ast}(\pi) =
-\left< \pi, \left[\frac{\delta F}{\delta \pi}, \frac{\delta G}{\delta \pi} \right] \right>
\end{equation}
for functions $F(\Pi, \mathbf{P}), G(\Pi, \mathbf{P})$ on $\mathfrak{se}(2)^\ast$, 
and since 
\[
	\frac{\delta F}{\delta \pi} := \nabla_{(\Pi, \mathbf{P})}  F= \left( \frac{\partial F}{\partial \Pi}, \nabla_{\mathbf{P}} F \right)
\]
we have that the bracket is explicitly given  by 
\begin{equation} \label{liepoisson}
\{F, G\}_{\mathfrak{se}(2)^\ast} = 
  (\nabla_{(\Pi, \mathbf{P})}  F)^T \Lambda \nabla_{(\Pi, \mathbf{P})}  G,
\quad 
\text{where}
\quad
  \Lambda = \begin{pmatrix} 
    0 & -P_y & P_x \\
    P_y & 0 & 0 \\
    -P_x & 0 & 0 
    \end{pmatrix}.
\end{equation}

For future reference, we denote by 
\begin{equation} \label{dualbasis}
\mathbf{e}^\ast_\Omega = (1, 0, 0), \quad 
	\mathbf{e}^\ast_x = (0, 1, 0) \quad
	\mathbf{e}^\ast_y = (0, 0, 1)
\end{equation}
the basis of $\mathfrak{se}(2)^\ast \cong \mathbb{R}^3$ which is dual to the basis (\ref{sebasis}) of $\mathfrak{se}(2)$.

\section{The Group of Volume-Preserving Diffeomorphisms} 
\label{appendix:diffgroup}

In this appendix we recall some properties of the group of volume preserving diffeomorphisms and its Lie algebra.  This material is well-known and can be found for instance in \cite{Ar66}, \cite{EbinMarsden70}, and \cite{ArnoldKhesin}.

The fluid domain is a subset of $\mathbb{R}^2$. The reference configuration of the fluid is denoted by $\mathcal{F}_0$ and the space taken by the fluid at a generic
time $t$ is denoted by $\mathcal{F}$. The fluid domain $\mathcal{F}$ is a 
submanifold of $\mathbb{R}^2$ and hence it inherits the Euclidian metric from $\mathbb{R}^2$, 
which we denote by $\left<\!\left< v, w \right>\!\right>$ for all $v, w \in T\mathcal{F}$.
The associated Hodge star operator will be denoted by $\ast : \Omega^k(\mathcal{F}) \rightarrow \Omega^{2-k}(\mathcal{F})$, and using $\ast$ we may write the induced metric on the space of forms by 
\begin{equation} \label{formnorm}
	\left<\!\left< \alpha, \beta \right>\!\right> := \int_{\mathcal{F}} \alpha \wedge \ast \beta.
\end{equation}
We also introduce the co-differential $\delta: \Omega^k(\mathcal{F}) \rightarrow \Omega^{k-1}(\mathcal{F})$ by the standard prescription $\boldsymbol{\delta} \alpha = \ast \, \mathbf{d} \ast \alpha$.
Lastly, we introduce the index raising and lowering isomorphisms induced by the metric, and write them respectively as $\sharp : T^\ast \mathcal{F} \rightarrow T\mathcal{F}$, where $\left<\!\left< \sharp(\alpha) , v \right>\!\right> = \alpha(v)$, and $\flat = \sharp^{-1}$.

An incompressible motion of the fluid is a one-parameter (time) family of elements of the group $\mathrm{Diff}_{\mathrm{vol}}(\mathcal{F}_0)$ of volume-preserving diffeomorphisms, 
which encodes the symmetry of the fluid by arbitrary particle relabelings. 
Formally speaking, $\mathrm{Diff}_{\mathrm{vol}}(\mathcal{F}_0)$ is a Lie group with as its Lie algebra the space $\mathfrak{X}_{\mathrm{vol}}(\mathcal{F}_0)$ of divergence-free vector fields which are tangent to the boundary of the body $\mathcal{B}$.  The Lie bracket is the negative of the commutator of vector fields. The dual space $\mathfrak{X}^\ast_{\mathrm{vol}}(\mathcal{F}_0)$ can be identified with the space $\Omega^1(\mathcal{F}_0)/\mathbf{d}\Omega^0(\mathcal{F}_0)$ of one-forms up to exact forms with duality pairing 
\begin{equation} \label{dual}
	\left<X, [\alpha]\right>= \int \alpha(X) \, dV,
\end{equation}
but a better identification can be made.  Any equivalence class $[\alpha] \in \mathfrak{X}^\ast_{\mathrm{vol}}(\mathcal{F}_0)$ is uniquely determined by the exterior derivative $\mathbf{d}\alpha$ and by its value on a basis of the first homology of $\mathcal{F}_0$.  Because of the presence of a single rigid body, the homology is generated by any closed curve $\mathcal{C}$ around the body, and the value of $\alpha$ on $\mathcal{C}$ is simply given by 
\begin{equation} \label{gamma}
	\Gamma := \int_\mathcal{C} \alpha.
\end{equation}
The derivative $\mathbf{d}\alpha$ corresponds to the \textbf{\emph{vorticity}} of the system, while $\Gamma$ corresponds to the \textbf{\emph{circulation}}.  We hence have an isomorphism between $\mathfrak{X}^\ast_{\mathrm{vol}}(\mathcal{F}_0)$ and $\mathbf{d}\Omega^1(\mathcal{F}_0) \times \mathbb{R}$ given by 
\begin{equation} \label{iso}
	[\alpha] \mapsto (\mathbf{d}\alpha, \Gamma),
\end{equation}
where $\Gamma$ is given by (\ref{gamma}).

We finish by noting that $\mathrm{Diff}_{\mathrm{vol}}(\mathcal{F}_0)$ acts on $\mathfrak{X}_{\mathrm{vol}}(\mathcal{F}_0)$ and $\mathfrak{X}^\ast_{\mathrm{vol}}(\mathcal{F}_0)$ through the adjoint and the
co-adjoint action, respectively.  These actions are both given by push-forward: if $\phi$ is an element
of $\mathrm{Diff}_{\mathrm{vol}}(\mathcal{F}_0)$, and $\mathbf{u}$ and $(\mathbf{d} \alpha, \Gamma)$ are elements of
$\mathfrak{X}_{\mathrm{vol}}(\mathcal{F}_0)$ and $\mathfrak{X}^\ast_{\mathrm{vol}}(\mathcal{F}_0)$, respectively, then
\begin{equation} \label{coad}
	\mathrm{Ad}_\phi(\mathbf{u}) = \phi_{\ast}  \mathbf{u}, \quad\text{and}\quad 
 \mathrm{CoAd}_\phi(\mathbf{d} \alpha, \Gamma) = (\mathbf{d} (\phi_{\ast}  \alpha), \Gamma).
\end{equation}
The fact that the co-adjoint action leaves the circulation $\Gamma$
invariant is related to Kelvin's theorem, which says that the circulation around a material loop in the fluid is constant.


\end{document}